\documentclass[11pt]{article} % use larger type; default would be 10pt

\usepackage[utf8]{inputenc} % set input encoding (not needed with XeLaTeX)

%%% Examples of Article customizations
% These packages are optional, depending whether you want the features they provide.
% See the LaTeX Companion or other references for full information.

%%% PAGE DIMENSIONS
\usepackage[margin=1in]{geometry}% to change the page dimensions
\geometry{letterpaper} % or letterpaper (US) or a5paper or....                                ***(had a4paper originally)
% \geometry{margin=2in} % for example, change the margins to 2 inches all round
% \geometry{landscape} % set up the page for landscape
%   read geometry.pdf for detailed page layout information

\usepackage{graphicx} % support the \includegraphics command and options

% \usepackage[parfill]{parskip} % Activate to begin paragraphs with an empty line rather than an indent

%%% PACKAGES
\usepackage{booktabs} % for much better looking tables
\usepackage{array} % for better arrays (eg matrices) in maths
\usepackage{paralist} % very flexible & customisable lists (eg. enumerate/itemize, etc.)
\usepackage{verbatim} % adds environment for commenting out blocks of text & for better verbatim
\usepackage{subfig} % make it possible to include more than one captioned figure/table in a single float
% These packages are all incorporated in the memoir class to one degree or another...
\usepackage{transparent}%Inkscape
\usepackage{color}
%%% HEADERS & FOOTERS
\usepackage{fancyhdr} % This should be set AFTER setting up the page geometry
\pagestyle{fancy} % options: empty , plain , fancy
 % customise the layout...
\lhead{}\chead{}\rhead{}
\lfoot{}\cfoot{\thepage}\rfoot{}

%%% SECTION TITLE APPEARANCE
\usepackage{sectsty}
\allsectionsfont{\sffamily\mdseries\upshape} % (See the fntguide.pdf for font help)
% (This matches ConTeXt defaults)

%%% ToC (table of contents) APPEARANCE
\usepackage[nottoc,notlof,notlot]{tocbibind} % Put the bibliography in the ToC
\usepackage[titles,subfigure]{tocloft} % Alter the style of the Table of Contents

 % No bold!

%%% END Article customizations
%%% The "real" document content comes below...

\usepackage{amsmath,amsthm}
\usepackage{amssymb}
\usepackage{slashed}
\usepackage{accents}
\usepackage{wrapfig}
\newcommand{\ubar}[1]{\underaccent{\bar}{#1}}
\newcommand{\s}{\slashed}

\newtheorem*{theorem*}{Theorem}
\newtheorem{theorem}{Theorem}[section]
\newtheorem{proposition}[theorem]
{Proposition}
\newtheorem{lemma}{Lemma}[subsection]

\newtheorem{definition}{Definition}[section]
\newtheorem{remark}{Remark}[section]

\DeclareMathOperator{\tr}{tr}
\DeclareMathOperator{\II}{II}

\title {Quasi-Round MOTSs and Stability of the Schwarzschild\\ Null Penrose Inequality}
\author{Henri Roesch}
\date{\today} 

\begin{document}
\maketitle

\begin{abstract}
In \cite{R}, the notion of Double Convexity for a foliation of a conical null hypersurface was introduced to give a proof, if satisfied, of the Null Penrose Inequality. Double Convexity constrains the geometry of a Marginally Outer Trapped Surface (MOTS), called a quasi-round MOTS. In the first part of this paper, for a class of strictly stable Weakly Isolated Horizons, we show the existence of a unique foliation by quasi-round MOTS. In the second part, we show that any subsequent space-time perturbation continues to admit a quasi-round MOTS. Finally, for perturbations of the quasi-round MOTS in Schwarzschild, we identify sufficient conditions on the asymptotics of any past-pointing null hypersurface that yields the Null Penrose Inequality.
\end{abstract}

\tableofcontents

\section{Introduction}
In \cite{P1,P2}, Roger Penrose conjectured a geometric inequality to represent the physically natural notion, that the total mass of an isolated physical system should be no larger than the combined mass of its black holes. The nuance in this simple statement comes form the fact that the total mass, or the \textit{ADM Mass}, $M_{ADM}$ (see \cite{arnowitt2008republication}), is a geometric invariant measured `at infinity', and the black hole, $\Sigma_H$, has a quasi-local mass involving the area of it's boundary. Denoting the black hole area $|\Sigma_H|$, the Penrose Conjecture takes the form:
$$\sqrt{\frac{|\Sigma_H|}{16\pi}}\leq M_{ADM}.$$
Following a heuristic argument, Penrose formulated this inequality from various ingredients, some mathematically verified, but all physically reasonable, as a viability test of the \textit{weak Cosmic Censorship hypothesis}. This hypothesis provides a remedy to the development of singularities as shown in the famous singularity theorems of Hawking \cite{Hawking:1965mf}, and Penrose \cite{P3}, by positing a fundamental structure for solutions to Einstein's field equations that hide these singularities, namely black holes. If the Penrose Conjecture is shown to be false, this would deal a deadly blow to Cosmic Censorship. Cases in which the conjecture have been verified include spherical symmetry \cite{Ha, PhysRevD.49.6931}, and time symmetric slices \cite{huisken2001inverse},\cite{bray2001proof, bray2009riemannian}. The setting of this paper involves a more recent approach utilizing a null conical hypersurface or \textit{Null Cone} (Definition \ref{d8}).\\
\indent From this perspective, a quasi-local black hole is represented by a spacelike 2-sphere, $\Sigma_0$, with vanishing future null expansion, called a \textit{Marginally Outer Trapped Surface} (or MOTS). The collection, or congruence, of ingoing light rays that orthogonally intersect $\Sigma_0$ rule (at least in a neighborhood of $\Sigma_0$) a smooth null hypersurface. We will refer to a \textit{Null Cone}, $\Omega$, whenever this collection of light rays continue to extend to the infinite past as a smooth conical null hypersurface (see Definition \ref{d8}). Arguably the simplest and best known example of a Null Cone is given by the standard light-cone, $\Lambda$, of flat Minkowski space. Alternatively, one may consider the famous Schwarzschild spacetime modeling a static spherically symmetric black hole in vacuum. Taking $\Sigma_0$ to be a standard spherically symmetric MOTS yields a Null Cone, $\Omega_S$, indistinguishable from the point of view of its intrinsic geometry to that of the light-cone $\Lambda$. We will simply refer to this spherically symmetric null hypersurface, $\Omega_S$, as the \textit{standard Schwarzschild Null Cone}. For general Null Cone geometries, one obtains a foliation (or a flow within $\Omega$), by spherical leaves that expand pointwise along these rays to past `null infinity'. For every flow whereby the evolving family of induced metrics asymptotically rescale to the standard round metric on $\mathbb{S}^2$, one identifies an abstract `observer at infinity'. One also identifies a fixed velocity $\vec{v}$ (with $|\vec{v}|<1$, since the speed of light, $c$, is scaled to $c=1$) relative to the isolated system, and an associated total \textit{Trautman-Bondi energy}, $E_{TB}(\vec{v})$. By taking an infimum, we obtain the \textit{Trautman-Bondi mass}, $m_{TB}:=\inf_{\{\vec{v}||\vec{v}<1\}}E_{TB}(\vec{v})$ (see Definition \ref{d15}). The Penrose Conjecture in this setting takes the form,
$$\sqrt{\frac{|\Sigma_0|}{16\pi}}\leq m_{TB}(\Omega).$$
This form of the conjecture is often referred to as the \textit{Null Penrose Inequality} and has been verified when $\Omega$ is shear-free in a vacuum spacetime by J. Sauter \cite{S}. For small vacuum perturbations of the metric around the spherically symmetric Null Cone in the Schwarzschild spacetime, denoted $\Omega_S$, the inequality has been announced to hold for the \textit{Weak Null Penrose Conjecture}, namely for the weaker upper bound $E_{TB}(\vec{v})$, by S. Alexakis \cite{A}. One also obtains a geometric analogue of the Null Penrose Inequality for surfaces in the flat Minkowski spacetime, the so called \textit{Gibbons-Penrose Conjecture}, resulting from Penrose's original consideration of a collapsing thin shell of null dust, \cite{Gibbons_1997}. Interestingly, the Gibbons-Penrose Conjecture generalizes the classical Minkowski inequality for convex 2-spheres in $\mathbb{R}^3$. Work by M.T. Wang \cite{MT1}, and M. Mars-A. Soria \cite{0264-9381-29-13-135005} verified the Gibbons-Penrose conjecture for a collection of surfaces in Minkowski spacetime, although the general case remains open. An interesting analogue of the thin-shell case is also verified for a variety of surfaces in the Schwarzschild spacetime by S. Brendle-M.T. Wang \cite{Brendle2014}.\\
\indent A general proof of the Weak Null Penrose Inequality was claimed by M. Ludvigsen and J.A.G Vickers \cite{ludvigsen1983inequality}, but G. Bergqvist \cite{bergqvist1997penrose} pointed out that no guarantee of `asymptotic roundness' for their given foliation had been justified in order to secure comparison with total energy. In \cite{S}, Sauter also runs into this difficulty. Sauter identified two different flows, namely a `uniformly area expanding flow', and a `constant mass aspect flow', both yielding a non-decreasing Hawking Energy (see Definition \ref{d14}). Subsequently, for small metric perturbations around $\Omega_S$ in the Schwarzschild spacetime, Sauter was able to bound the MOTS mass $\sqrt{|\Sigma_0|/16\pi}$ by the asymptotic limit of Hawking Energy. Unfortunately, the needed asymptotic roundness does not necessarily develop. In \cite{MS2}, M. Mars - A. Soria showed, under fairly generic assumptions, the existence of a foliation of $\Omega$ called \textit{Geodesic Asymptotically Bondi} (GAB), that exhibits the necessary decay in the Ludvigsen-Vickers-Bergqvist approach. With the use of a new energy functional, these authors were also able to bound the MOTS mass by the asymptotic limit of the Hawking Energy. Again, the GAB foliation does not necessarily become round at infinity, and therefore the difficulty of relating the resulting limit to total energy persists.\\ 
\indent In \cite{A}, Alexakis employs a careful analysis of the induced geometry of spherical cuts at null infinity that result from the uniformly area expanding flow, or \textit{luminosity flow}, in Sauter's work. Using an Implicit Function Theorem argument `at null infinity', Alexakis was able to choose a Null Cone exhibiting the desired asymptotics inside small vacuum perturbations of Schwarzschild spacetime. In this paper, we approach this same problem by showing that one can instead use an Implicit Function Theorem to choose a viable MOTS within small metric perturbations, putting us in a position to apply recent results in \cite{R}. This allows us to prove the full Null Penrose Inequality in this setting, even for non-vacuum perturbations with `reasonable decay' (see (\ref{e18}-\ref{e21}) below).\\
\indent More specifically, in \cite{R} the author constructs a new mass functional $m(\Sigma)$ for a spacelike 2-sphere $\Sigma$. Given convexity conditions on $\Sigma$ (see Definition \ref{d5} below), it follows (see Proposition \ref{p1} below) that $\frac{d}{dt}|_{t=0}m(\Sigma_t)\geq 0$ along -any- past-pointing null flow $\{\Sigma_t\}$ off of $\Sigma$. Consequently, if $\{\Sigma_t\}$ is also a foliation of $\Omega$ where each leaf $\Sigma_t$ respects these convexity conditions, called a \textit{doubly convex} foliation, we deduce a non-decreasing mass, since then $\frac{d}{dt}m(\Sigma_t)\geq 0$ for all $t$. In \cite{R}, it is also shown that the mass converges and the limiting value is independent of any choice of foliation in a neighborhood of infinity. Moreover, simply the existence of a doubly convex foliation enforces that this limit underestimates $m_{TB}$. Now, as a consequence of the Strong Maximum Principle whenever these convexity conditions are satisfied on a MOTS, called a \textit{quasi-round MOTS}, one also observes the desired `initial' mass $m(\Sigma_0)=\sqrt{|\Sigma_0|/16\pi}$. Therefore, in the final part of this work, we start by applying earlier results to verify the existence of a quasi-round MOTS for a small metric perturbation of the Schwarzschild geometry around a quasi-round MOTS within $\Omega_S$. In the $\Omega_S$ case, the spherically symmetric leaves yield a doubly convex foliation off of a quasi-round (in-fact perfectly round) MOTS. Moreover, the radial coordinate $r$ parametrizing these leaves is affine. In Section 4 (see Propositions \ref{p7}, and \ref{p9}), we show sufficiently small perturbations of $\Omega_S$ continue to support doubly convex affine foliations off our quasi-round MOTS, provided the asymptotic decay conditions (\ref{e18}-\ref{e21}) are satisfied. Consequently:
 $$\sqrt{\frac{|\Sigma_0|}{16\pi}}=m(\Sigma_0)\leq \lim_{s\to\infty}m(\Sigma_s)\leq m_{TB}.$$
 \subsection{Initial Constructions and known Results}
 In this section we start by un-packing the details of a main result in \cite{R} (see Theorem 1.1) that we will need in this paper. We will largely borrow from the construction presented there necessary for describing the main results of this paper.\\
\indent A spacetime $(\mathcal{M},g)$ is defined to be a four dimensional smooth manifold $\mathcal{M}$ equipped with a metric $g(\cdot,\cdot)$ (or $\langle\cdot,\cdot\rangle$) of Lorentzian signature $(-,+,+,+)$. We assume that the spacetime is both orientable and time orientable, i.e. admits a nowhere vanishing timelike vector field, defined to be future-pointing. Our convention for the Riemann curvature tensor is given by:
$$R_{XY}Z = D_{[X,Y]}Z-[D_X,D_Y]Z,$$
where $X,Y,Z\in\Gamma(T\mathcal{M})$. For a local orthonormal frame $\{e_i\}\subset \Gamma(TU)$, $U\subset \mathcal{M}$, such that $\langle e_0,e_0\rangle = -1$, $\langle e_j,e_j\rangle = 1$, $1\leq j\leq 3$, we define the Ricci tensor, $\text{Ric}$, and scalar, $R$:
\begin{align*}
\text{Ric}(X,Y) &= -\langle R_{e_0 X}e_0,Y\rangle+\sum_{j=1}^3\langle R_{e_j X}e_j,Y\rangle,\\
R &= -\text{Ric}(e_0,e_0)+\sum_{j=1}^3\text{Ric}(e_j,e_j).
\end{align*}
\indent We will also assume the \textit{Dominant Energy Condition} holds in $(\mathcal{M},g)$, namely that the ambient Einstein Curvature Tensor $G:=\text{Ric}-\frac12 R g$ satisfies $G(T,S)\geq 0$ for $S,T\in\Gamma(T\mathcal{M})$ future pointing timelike vector fields.\\
\indent Throughout this paper, we will denote a spacelike embedding of a sphere $\iota:\mathbb{S}^2\hookrightarrow\mathcal{M}$ by $\Sigma:=\iota(\mathbb{S}^2)$, with induced (Riemannian) metric $\gamma$. We will also denote the set of smooth functions on $\Sigma$ by $\mathcal{F}(\Sigma)$. It is well known that $\Sigma$ has trivial normal bundle, $T^{\perp}\Sigma$, with induced metric of signature $(-,+)$. Therefore, from any choice of null section $\ubar L\in \Gamma(T^{\perp}\Sigma)$, we have a unique null partner $L\in\Gamma(T^{\perp}\Sigma)$ satisfying $\langle \ubar L,L\rangle = 2$. This provides $T^{\perp}\Sigma$ with a null basis $\{L,\ubar L\}$. 

\begin{figure}[h]
\centering
\includegraphics[width=0.75\textwidth]{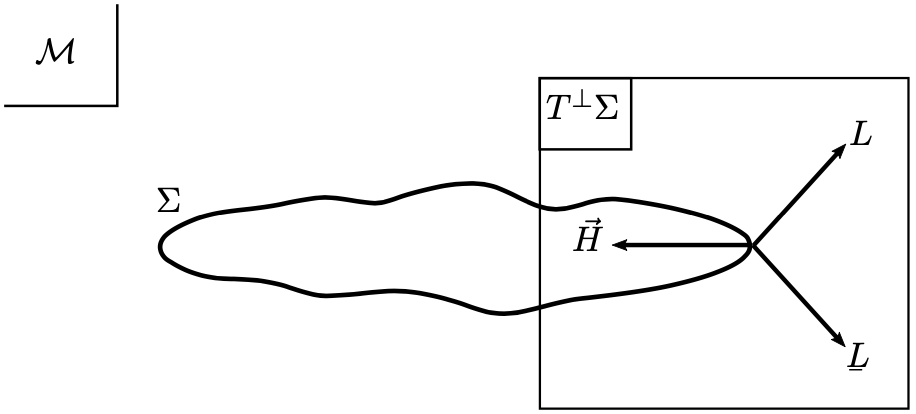}
\end{figure}

We also notice that any ``boost" $\{\ubar L,L\}\to\{\ubar L_a,L_a\}$ given by:
$$\ubar L_a:= a\ubar L,\,\,\, L_a:= \frac{1}{a}L$$ 
(for $a\in\mathcal{F}(\Sigma)$ a non-vanishing smooth function on $\Sigma$) insures $\langle \ubar L_a,L_a\rangle=\langle \ubar L,L\rangle=2$ so we observe a gauge freedom in our choice of basis for $T^\perp\Sigma$.\\
\indent Our convention for the second fundamental form II and mean curvature $\vec{H}$ of $\Sigma$ are
$$\text{II}(V,W) = D^{\perp}_VW,\,\,\,\,\vec{H} = \tr_\Sigma\text{II}$$
for $V,W\in \Gamma(T\Sigma)$, where $D$ represents the Levi-Civita connection of the ambient spacetime. 
\begin{definition} 
\label{d1}
Given a choice of null basis $\{\ubar L,L\}$, following the conventions of Sauter \cite{S}, we define the associated symmetric 2-tensors $\ubar\chi,\chi$ and torsion (connection 1-form), $\zeta$, by
\begin{align*}
\ubar\chi(V,W) &:= \langle D_V\ubar L,W\rangle = -\langle \ubar L,\II(V,W)\rangle\\
\chi(V,W) &:=\langle D_VL,W\rangle = -\langle L,\II(V,W)\rangle\\
\zeta(V) &:= \frac12\langle D_V\ubar L,L\rangle=-\frac12\langle D_VL,\ubar L\rangle
\end{align*}
where $V,W\in \Gamma(T\Sigma)$.
\end{definition}
Any boosted basis $\{\ubar L_a,L_a\}$ produces the associated tensors of Definition \ref{d1}:
\begin{align*}
\ubar\chi_a(V,W)&:=\langle D_V(a\ubar L),W\rangle = a\ubar\chi(V,W)\\
\chi_a(V,W)&:=\langle D_V(\frac{1}{a}L),W\rangle = \frac{1}{a}\chi(V,W)\\
\zeta_a(V)&:=\frac12\langle D_V(a\ubar L),\frac{1}{a}L\rangle = \zeta(V)+V\log|a|=(\zeta+{d}\log|a|)(V).
\end{align*}
For a symmetric 2-tensor $T$ on $\Sigma$, its \textit{trace-free} (or \textit{trace-less}) part is given by
$$\hat{T}:=T-\frac12(\tr_\gamma T)\gamma$$
allowing us to decompose $\ubar\chi$ into its \textit{shear} and \textit{expansion} components respectively:
$$\ubar\chi = \hat{\ubar\chi}+\frac12(\tr\ubar\chi)\gamma.$$
\begin{definition}
\label{d2}
We say $\Sigma$ \textit{is expanding along $\ubar L$} for some null section ${\ubar L\in\Gamma(T^{\perp}\Sigma)}$ provided that,
\begin{equation}
\langle-\vec{H},\ubar L\rangle = \tr\ubar\chi>0\tag{\dag}
\end{equation}
on all of $\Sigma$. 
\end{definition} 
Any infinitesimal flow of $\Sigma$ along $\ubar L$ gives, by first variation of area, $\delta_{\ubar L}{dA} = \langle-\vec{H},\ubar L\rangle dA= \tr\ubar\chi dA$. So the flow is locally area expanding, $\delta_{\ubar L}{dA}>0$, only if $\Sigma$ ``is expanding along $\ubar L$". For $\Sigma$ expanding along some $\ubar L\in\Gamma(T^\perp\Sigma)$ we are able to choose a canonical null basis $\{L^-,L^+\}$ by requiring that our flow along $L^- = a\ubar L$ be uniformly area expanding, $\delta_{L^-}{dA} = dA$. From first variation of area, flowing along $a\ubar L$ gives
$$\delta_{a\ubar L}{dA} = -\langle\vec{H},a\ubar L\rangle dA = a\tr\ubar\chi dA.$$
So we achieve a uniformly area expanding null flow along $L^- = \frac{\ubar L}{\tr\ubar\chi}$.
\begin{definition}\label{d3} 
For $\Sigma$ expanding along some $\ubar L\in\Gamma(T^\perp\Sigma)$, we call the associated uniformly area expanding null basis $\{L^-,L^+\}$ given by
$$L^-:=\frac{\ubar L}{\tr\ubar\chi},\,\,\, L^+:=\tr\ubar\chi L$$
the \textit{null inflation basis}.
\end{definition}
\indent We also define $\chi^{-(+)} := -\langle \II,L^{-(+)}\rangle$. We observe that 
\begin{align*}
\tr\chi^- &= 1\\ 
\tr\chi^+ &= \tr\ubar\chi\tr\chi = \langle\vec{H},\vec{H}\rangle,
\end{align*}
and for $V\in\Gamma(T\Sigma)$, the torsion associated to this basis is given by
$$\tau(V)=\frac12\langle D_VL^-,L^+\rangle = (\zeta-{d}\log\tr\ubar\chi)(V).$$
\begin{remark}
The vector field $L^-$ is precisely the ``velocity" that generates the luminosity flow considered in the work of both Sauter \cite{S}, and Alexakis \cite{A}.
\end{remark}
We will denote the induced covariant derivative on $\Sigma$ by $\nabla$.
\begin{definition}\label{d4}
Assuming $\Sigma$ is expanding along $\ubar L\in\Gamma(T^\perp\Sigma)$, we define the geometric flux function
\begin{equation}
\rho = \mathcal{K}-\frac14\langle\vec{H},\vec{H}\rangle+ \nabla\cdot\tau \label{e1}
\end{equation}
where $\mathcal{K}$ represents the Gaussian curvature of $\Sigma$.\\
This allows us to define an associated quasi-local mass
\begin{equation}
m(\Sigma) = \frac12\Big(\int_\Sigma\rho^{\frac23}\frac{dA}{4\pi}\Big)^{\frac32} \label{e2}.
\end{equation}
\end{definition}
We note here that the flux function $\rho$ is related to the similarly denoted component of the Weyl curvature tensor in \cite{christodoulou2014global}. More specifically, $\rho$ agrees with the conjugate mass aspect function $\bar{\mu}$ of \cite{christodoulou2014global} under the gauge choice $\{L^-,L^+\}\subset \Gamma(T^\perp\Sigma)$. We refer the reader to \cite{R} for a full description behind the utility of the mass functional (\ref{e2}). Specific to the the Schwarzschild Null Cone $\Omega_S$ we refer the reader to Remark \ref{r2}. Roughly speaking, as we will also see in the proof of Theorem \ref{t4}, this mass functional is insensitive to `infinitesimal boosts' along a foliation of a Null Cone. As expected from a mass, this is unlike the behavior of energy functionals used in prior attempts at the Null Penrose Inequality. We will shortly observe that (\ref{e2}) also exhibits favorable monotonicity.\\
\indent For a normal null flow off of some $\Sigma$ with null flow vector $\ubar L$, technically the flow speed is zero since $\langle \ubar L,\ubar L\rangle = 0$. In case $\Sigma$ is expanding along $\ubar L$, we may define the \textit{\textit{expansion speed}}, $\sigma$, according to $\ubar L = \sigma L^-$. We notice that $\sigma = \tr\ubar\chi$.
We now state the first main result of \cite{R}:
\begin{proposition}[\cite{R}, Theorem 1.1]\label{p1}
Let $\Omega$ be a Null Cone foliated by spacelike spheres $\{\Sigma_s\}_s$ expanding along the null flow direction $\ubar L=\sigma L^-$ such that $|\rho(s)|>0$ for each $s$. Then the mass $m(s):=m(\Sigma_s)$ has rate of change 
\begin{align*}\frac{dm}{ds}=\frac{(2m)^\frac13}{8\pi}\int_{\Sigma_s}\frac{\sigma}{\rho^{\frac13}}\Big((|\hat{\chi}^-|^2+G(L^-,L^-))(\frac14\langle\vec{H},\vec{H}\rangle-\frac13\Delta\log|\rho|)+\frac12|\nu|^2+G(L^-,N)\Big)dA
\end{align*}
where,
$$\nu:=\frac23\hat{\chi}^-\cdot{d}\log|\rho|-\tau,\,\,\,
N:=\frac19|{d}\log|\rho||^2L^-+\frac13\nabla\log|\rho|-\frac14L^+.$$
\end{proposition}
We now observe that the section $N$ is in-fact a past-pointing null vector field. From the fact that the foliation $\{\Sigma_s\}_s$ is expanding along $\ubar L$ (i.e. $\sigma>0$) we consequently deduce from the Dominant Energy Condition that the mass functional $m(\Sigma_s)$ is monotonically increasing whenever the leaves of the foliation $\{\Sigma_s\}_s$ satisfies the following convexity conditions:
\begin{definition}\label{d5}
A spacelike 2-sphere $\Sigma$ is called doubly convex if it satisfies the conditions: 
\begin{align*}
\rho&>0\\
\frac14\langle\vec{H},\vec{H}\rangle&\geq\frac13\Delta\log\rho.
\end{align*}
\end{definition}
\indent We will assume for a Null Cone $\Omega$ (see Definition \ref{d8}) that $\ubar L$ is past pointing. In Section 2 we also show that a Null Cone $\Omega$ is characterized by the property that any cross-section $\Sigma\subset\Omega$ is expanding along any null (pre)geodesic generator $\ubar L\in\Gamma(T\Omega)$. As a result, we define a quasi-local black hole as follows:
\begin{definition}\label{d6}
Consider a spacelike 2-sphere $\Sigma\subset\mathcal{M}$ expanding along a null past-pointing section $\ubar L\in\Gamma(T^\perp\Sigma)$. We say $\Sigma$ is a Marginally Outer Trapped Surface (MOTS), if
$$\tr\chi = 0 \iff\langle\vec{H},\vec{H}\rangle = 0$$
on $\Sigma$.
\end{definition}
Here we observe, if a MOTS $\Sigma_0$ also satisfies the convexity conditions of Definition \ref{d5}, then
\begin{align*}
\mathcal{K}+\nabla\cdot\tau>0\geq \Delta\log(\mathcal{K}+\nabla\cdot\tau).\
\end{align*}
Applying the strong Maximum Principle for elliptic equations, the second inequality above constrains that $\mathcal{K}+\nabla\cdot\tau$ be constant on $\Sigma_0$. Applying the Gauss-Bonnet, and Divergence theorems, we therefore have
$$\rho = \mathcal{K}+\nabla\cdot\tau = \frac{4\pi}{|\Sigma|}.$$
It follows directly from (\ref{e2}), that $m(\Sigma_0) = \sqrt{\frac{|\Sigma_0|}{16\pi}}$, the black hole mass as considered by Roger Penrose.
\begin{definition}\label{d7}
Suppose we have a MOTS $\Sigma\subset\mathcal{M}$, we say $\Sigma$ is quasi-round whenever
$$\mathcal{K}+\nabla\cdot\tau=\frac{4\pi}{|\Sigma|}$$
on $\Sigma$.
\end{definition}
\subsection{Overview of Main Results}
In the Schwarzschild spacetime, the unique MOTS $\Sigma_0$ of $\Omega_S$ is not only quasi-round (in-fact round, see Section 2.1), it is also the 2-sphere of intersection, $\Sigma_0 = \Omega_S\cap\mathcal{H}_S$, with the Schwarzschild \textit{Killing Horizon}, $\mathcal{H}_S$. More generally, a Non Expanding Horizon (or NEH, Definition \ref{d9}) $\mathcal{H}$, similarly to a Null Cone, is a spherical congruence of light rays ruling a null hypersurface. In contrast to a Null Cone, a NEH is characterized by the property that all cross-sections $\Sigma\subset\mathcal{H}$ have vanishing future null expansion. Roughly speaking, any cross-section of $\mathcal{H}$ is a MOTS. Now a Killing Horizon such as $\mathcal{H}_S$, is a special sub-class of NEH with the property that any null tangent is proportional to the restriction of a Killing field from the ambient spacetime, say $\xi$. As we will see in Section 2, from the fact that $\xi|_{\mathcal{H}}=l$ generates null pregeodesics, we have $D_ll = \kappa l$. The fact that $\xi$ is also Killing forces $\kappa$, called the \textit{surface gravity}, to be a constant on all of $\mathcal{H}_S$. Ignoring all the properties of $\mathcal{H}_S$ other than constancy of surface gravity characterizes a broader class of NEH called \textit{Weakly Isolated Horizons} (or WIH, Definition \ref{d10}). In Section 2 we impose translation invariance for the past null expansion of leaves along a WIH, called \textit{optical rigidity} (Definition \ref{d11}). We subsequently show correlation between the sign of the surface gravity and the notion of stability for a WIH. Given these constraints, our first main result identifies a unique foliation by quasi-round MOTS. 
 \begin{theorem*}\textbf{\ref{t1}}
If $\mathcal{H}$ is a strictly stable, optically rigid WIH, then it admits a unique foliation by quasi-round MOTSs.
 \end{theorem*}
 In the second part of this paper we depart from NEHs globally and we concentrate on a strictly stable MOTS that `infinitesimally generates a NEH'. \\
 \indent In \cite{M1}, M. Mars shows that the structure of a sufficiently constrained NEH imposes the stability of any one of its cross-sections upon all others. This is how we make sense of the stability of $\mathcal{H}$ in Theorem \ref{t1}. In-fact, it is precisely the stability of a given cross-section, characterized by the invertibility of an elliptic operator on $\mathbb{S}^2$ (see Section 2.2), that allows us to solve and obtain a unique foliation by quasi-round MOTSs. \\
 \indent In work by L. Andersson-M. Mars-W. Simon in \cite{andersson2008}, the authors observe, whenever a hypersurface contains a MOTS that is strictly stabile relative to transverse variations within the hypersurface, then the existence of a MOTS within a perturbed hypersurfaces follows by way of an Implicit Function Theorem. Analogously in our setting, as a result of the strict stability of a given quasi-round MOTS, we obtain linear invertibility in the future null direction under the quasi-round constraint, and linear invertibility in the past null direction under the MOTS constraint. Since our unperturbed quasi-round MOTS is co-dimension two, we are able to combine both of these into an Implicit Function Theorem to guarantee a quasi-round MOTS for small metric pertrubations.
 \begin{theorem*}\textbf{\ref{t3}}
Consider a neighborhood $\mathcal{V}:=(-a,a)\times(-b,b)\times\mathbb{S}^2$, $a,b>0$ and a smooth Lorentzian metric $g_0\in \text{Sym}(T^\star\mathcal{V}\otimes T^\star\mathcal{V})$ satisfying the Dominant Energy Condition. Assume also that $\Sigma_0\cong \{0\}\times\{0\}\times\mathbb{S}^2$ is a strictly stable quasi-round MOTS in the geometry $(\mathcal{V},g_0)$ satisfying $\delta_{L^+}\langle\vec{H},\vec{H}\rangle=0$. Then, given a smooth path of Lorentzian metrics $\lambda\to g_\lambda\in \text{Sym}(T^\star\mathcal{V}\otimes T^\star\mathcal{V})$, $\lambda\in [0,c)\subset\mathbb{R}$, each $g_\lambda$ satisfying the Dominant Energy Condition, there exists an $0<\epsilon<c$ and a unique smooth path of embeddings, $\phi_\lambda:\mathbb{S}^2\hookrightarrow\mathcal{V}$, $0\leq \lambda\leq \epsilon$, such that each $\Sigma_\lambda:=\phi_\lambda(\mathbb{S}^2)$ is a quasi-round MOTS in the geometry $(\mathcal{V},g_\lambda)$.
 \end{theorem*}
Finally, after imposing `reasonable' decay similar to that of S. Alexakis \cite{A}, we show:
\begin{theorem*}\textbf{\ref{t4}}
Consider the Schwarzschild spacetime metric in ingoing Eddington-Finkelstein coordinates $(v,r,\vartheta,\varphi)$:
$$g_0 = -(1-\frac{2M}{r})dv\otimes dv+(dr\otimes dv+dv\otimes dr)+r^2\big(d\vartheta\otimes d\vartheta+(\sin\vartheta)^2d\varphi\otimes d\varphi\big),$$
on the neighborhood $\mathcal{U}:=(-\epsilon_0+v_0,\epsilon_0+v_0)\times(r_0,\infty)\times\mathbb{S}^2$, $\epsilon_0,r_0>0$. Consider also a smooth path of metrics, $\lambda\to g_\lambda\in \text{Sym}(T^\star\mathcal{U}\otimes T^\star\mathcal{U})$, $0\leq \lambda \leq c$, satisfying the Dominant Energy Condition. For $\{\Sigma_\lambda\}_{0\leq \lambda\leq \epsilon}$ the corresponding family of smooth quasi-round MOTS of Theorem \ref{t3}, whereby $\Sigma_0=\Omega_S\cap\mathcal{H}_S\cong\{v_0\}\times\{2M\}\times\mathbb{S}^2$ is a standard quasi-round MOTS of Schwarzschild, we assume the existence of an $\epsilon_1\leq \epsilon$ such that the past directed Null Cones, $\Omega_\lambda\supset\Sigma_\lambda$, $0\leq \lambda\leq \epsilon_1$, exist satisfying the curvature decay conditions (\ref{e18})-(\ref{e21}). Then, there exists an $0<\epsilon_2\leq \epsilon_1$ such that the Null Penrose Inequality
$$\sqrt{\frac{|\Sigma_\lambda|}{16\pi}}\leq m_{TB}(\lambda)$$
holds for $0\leq \lambda\leq \epsilon_2$.
\end{theorem*}
\section{Null Geometry and the Structure Equations}
As mentioned previously, in this paper we will be working with two different types of null hypersurface, namely \textit{Non Expanding Horizons} (or NEHs) and \textit{Null Cones}. It will be useful to introduce the notion of a general null hypersurface before specializing to our two examples.\\
\indent We take $\mathcal{N}$ to be a smooth, orientable, and connected hypersurface embedded in $(\mathcal{M},g)$. We say $\mathcal{N}$ is a \textit{null hypersurface} whenever the induced metric $\gamma:=g|_{\mathcal{N}}$ is degenerate. Equivalently, orientability of $\mathcal{N}$ ensures a smooth, non-vanishing vector field $\ubar L\in\Gamma(T\mathcal{N})$, such that $X\in\Gamma(T\mathcal{N})$ if and only if $\langle \ubar L,X\rangle = 0$. For the time being, we refrain from specifying whether $\ubar L$ is future, or past pointing. Since the ambient metric $g$ is non-degenerate, it follows that $\text{span}(\ubar L_p) = T_p^\perp\mathcal{N}\subset T_p\mathcal{N}$, for any $p\in\mathcal{N}$. Since $\mathcal{N}$ is a hypersurface, any $p\in\mathcal{N}$ admits a neighborhood $U_p\subset \mathcal{M}$, with a smooth function $v$ on $U_p$ (denoted $v\in\mathcal{F}(U_p)$) such that $V_p:=\mathcal{N}\cap U_p = \{v=0\}$, and the gradient $\text{grad}(v)\in\Gamma(T^\perp V_p)$ is nowhere vanishing. It follows that $\ubar L = f_1\text{grad}(v)$ for some smooth $f_1\neq 0$ on $V_p$, which implies $\langle \text{grad}(v), \text{grad}(v)\rangle|_{V_p} \equiv 0$. From the famous identity, $D_{\text{grad}(v)} \text{grad}(v) = \frac12\text{grad}\langle \text{grad}(v), \text{grad}(v)\rangle$, it follows that $\langle X,D_{\text{grad}(v)}\text{grad}(v)\rangle = 0$ for any $X\in\Gamma(T V_p)$. Therefore, $D_{\text{grad}(v)}\text{grad}(v) = f_2\text{grad}(v)$ on $V_p$, giving in turn $D_{\ubar L}\ubar L = \kappa\ubar L$, for smooth functions $f_2,\kappa\in\mathcal{F}(V_p)$. This allows us to conclude that integral curves of $\ubar L$, under a suitable re-parametrization, are null geodesics of $\mathcal{M}$ ruling the hypersurface $\mathcal{N}$. In other-words, $\mathcal{N}$ is a `congruence of null geodesics'.  \\
\indent We will also be imposing the existence of an embedded 2-sphere, $\iota:\mathbb{S}^2\hookrightarrow\mathcal{N}$, $\iota(\mathbb{S}^2)=\Sigma_0$, such that $\gamma|_{\Sigma_0}$ is a Riemannian or spacelike metric. We assume that any integral curve of $\ubar L$ intersects $\Sigma_0$ precisely once. This gives rise to a natural submersion $\pi:\mathcal{N}\to\Sigma_0$ sending $p\in\mathcal{N}$ to the intersection with $\Sigma_0$ of the integral curve $\beta_p^{\ubar L}$ of $\ubar L$, for which $\beta_p^{\ubar L}(0) = p$. Given $\ubar L$ and a constant $s_0$, we construct a smooth function $s\in\mathcal{F}(\mathcal{N})$ from the assignment that $\ubar L(s)=1$, and $s|_{\Sigma_0}= s_0$. For $q\in\Sigma_0$, if $(s_-(q),s_+(q))$ represents the range of $s$ along $\beta_q^{\ubar L}$, denoting ${S_-}:=\sup_{\Sigma}s_-$, and ${S_+}:=\inf_{\Sigma}s_+$, we notice that the interval $(S_-,S_+)$ is non-empty. Given that $\ubar L(s)=1$, the Implicit Function Theorem implies for $t\in(S_-,S_+)$, a spacelike embedding $\Sigma_t:= \{p\in\mathcal{N}|s(p)=t\}$ with diffeomorphism $\pi|_{\Sigma_t}:\Sigma_t\to\Sigma_0$. We refer to such spherical embeddings as \textit{cross-sections} of $\mathcal{N}$. As in Section 1.1, with a slight abuse of notation, we will also drop subscripts whenever referring to the induced metric on an arbitrary cross-section $\Sigma\hookrightarrow\mathcal{N}$, hence $\Sigma \cong (\mathbb{S}^2,\gamma)$. We note for $s<S_-$ or $s>S_+$, in the case that $\Sigma_s$ is non-empty, such surfaces remain smooth but may no longer be connected. Nevertheless, the collection $\{\Sigma_s\}_s$ gives a foliation of $\mathcal{N}$. We also highlight that the foliation $\{\Sigma_s\}_s\subset \mathcal{N}$ is dependent upon our choice of $\ubar L$. Any other viable candidate would have to be a re-scaling of $\ubar L$, say to $\ubar L_a:=a\ubar L$, for some $a\neq 0$, whereby $D_{\ubar L_a}\ubar L_a = a(\ubar L\log |a|+\kappa)\ubar L_a$. The corresponding foliation, $\{\Sigma^a_s\}\subset \mathcal{N}$, is a \textit{geodesic foliation} whenever $\ubar L\log |a| = -\kappa$, equivalently, $a(s,q) = a_0(q)e^{-\int_{s_0}^s\kappa(u,q)du}$, for $q\in \Sigma_{s_0}$.\\\\
\indent  A significant amount of our efforts will go towards analysing data on individual cross-sections, $\Sigma\subset\mathcal{N}$. The local ambient geometry extrinsic to $\mathcal{N}$ is observed through the adapted null vector field $L\in \Gamma(T^\perp \Sigma)$. For convenience, we remind the reader that we choose $L$ by assigning to every $q\in\Sigma$, the unique null vector satisfying $\langle \ubar L_q,L_q\rangle = 2$, and $\langle L_q,v\rangle = 0$, where $v\in T_q\Sigma$. Also, we recall the symmetric 2-tensors $\ubar\chi,\chi$ and the connection 1-from $\zeta$ with respect to $\{\ubar L,L\}\subset\Gamma(T^\perp\Sigma)$. For a foliation $\{\Sigma_s\}_s$ associated with $\ubar L$, we therefore similarly construct an associated null vector field $L_s$.\\
\indent Given a cross-section $\Sigma\subset\mathcal{N}$, and $v\in T_q(\Sigma)$ we may extend $v$ along the generator $\beta_q^{\ubar L}$ according to the ODE:
\begin{align*}
\dot{V}(s) &= D_{V(s)}\ubar L\\
V(0)&=v.
\end{align*} 
\indent As a result, we have $\frac{d}{ds}{\langle V(s),\ubar L\rangle}=\frac12V(s)\langle\ubar L,\ubar L\rangle+\langle V(s),D_{\ubar L}\ubar L\rangle=\kappa\langle V(s),\ubar L\rangle$, and since $v\in T_p\mathcal{N}\iff \langle \ubar L|_p,v\rangle = 0$, it follows that $\langle V(0),\ubar L_p\rangle=0$. By ODE uniqueness, we conclude $\langle V(s),\ubar L\rangle=0$ for all $s$. As a result, any section $W\in\Gamma(T\Sigma)$ may be extended througout $\mathcal{N}$ satisfying $[\ubar L,W]=0$. Along each generator, $0=[\ubar L,W]s = \ubar L(Ws)=\frac{d}{ds}(Ws)$, so that $Ws|_\Sigma=0$ ensures $Ws = 0$ throughout $\mathcal{N}$. We conclude along the foliation $\{\Sigma_s\}_s\subset\mathcal{N}$, that $W|_{\Sigma_s}\in\Gamma(T\Sigma_s)$, and denote by $E(\Sigma)\subset\Gamma(T\mathcal{N})$ the set of such extensions off of $\Sigma$ along $\ubar L$.  We also note that linear independence is preserved along generators by standard ODE uniqueness theorems, allowing us to extend a local basis $\{X_1,X_2\}\subset\Gamma(TU)$ ($U\subset \Sigma$), throughout $\pi^{-1}(U)$. Having established a background foliation $\{\Sigma_s\}_s$, the fact that $\mathcal{N}$ is generated by null pregeodesics along $\ubar L$ then uniquely characterizes any spacelike cross-section $\Sigma\subset\mathcal{N}$ as a graph over $\Sigma_0$ with graph function $s\circ (\pi|_{\Sigma})^{-1} = \omega\in\mathcal{F}(\Sigma_0)$. By Lie-dragging $\omega$ along $\ubar L$ to all of $\mathcal{N}$, we have for any $V\in E(\Sigma_0)$, that $(V+V\omega\ubar L)(s-\omega) = 0$, from which we conclude that $ V_\omega:=V+V\omega \ubar L$ satisfies $V_\omega|_\Sigma\in\Gamma(T\Sigma)$. We will now slightly abuse notation by ignoring obvious restrictions to $\Sigma$. Taking $L_\omega:=L_s+|\nabla\omega|^2\ubar L-2\nabla\omega$, it follows that $\langle L_\omega,V_\omega\rangle=\langle L_s,V\omega\ubar L\rangle-2\langle\nabla\omega,V_\omega\rangle = 2V\omega-2 V_\omega\omega=0$, therefore $L_\omega\in\Gamma(T^\perp\Sigma)$. As a result, $0=\langle L_\omega,\nabla\omega\rangle = \langle L_s,\nabla\omega\rangle-2|\nabla\omega|^2$, and we deduce therefore that $\langle L_\omega,L_\omega\rangle = 4|\nabla\omega|^2+2\langle L_s,|\nabla\omega|^2\ubar L\rangle-4\langle L_s,\nabla\omega\rangle=0$. So in-fact, $L_\omega: = L_s+|\nabla\omega|^2\ubar L-2\nabla\omega$ identifies again the previously constructed null partner to $\ubar L$ in the normal bundle of $\Sigma$. This time with respect to the background foliation $\{\Sigma_s\}_s$. It will also be helpful to define the pointwise projection of $\nabla\omega$ to the background foliation, specifically for any $p\in\Sigma\cap\Sigma_{s(p)}$, we take $\nabla_s\omega|_p:=(\nabla\omega-\frac12\langle L_s,\nabla\omega\rangle\ubar L)|_p\in T_p\Sigma_{s(p)}$. We see $\langle V|_p,\nabla_s\omega|_p\rangle = (V\omega)(p)$, so as suggested by the notation, $\nabla_s\omega$ agrees pointwise with the gradient on $\Sigma_s$ of the function $\omega|_{\Sigma_s}$.

\begin{remark}
We bring to the attention of the reader that $\ubar\chi$ is in-fact independent of the cross-section $\Sigma\subset\mathcal{N}$. Specifically, ignoring obvious restrictions for brevity, we observe $\ubar\chi(V_\omega,W_\omega) = \langle D_{V+V\omega\ubar L}\ubar L,W+W\omega\ubar L\rangle = \langle D_V\ubar L,W\rangle+W\omega\langle D_{V+V\omega\ubar L}\ubar L,\ubar L\rangle+V\omega\langle D_{\ubar L}\ubar L,W\rangle=\langle D_V\ubar L,W\rangle = \ubar\chi_s(V,W)$. Since any cross-section can be realized as the leaf of an appropriately chosen geodesic foliation, it follows that the non-trivial components of the second fundamental form of $\mathcal{N}$ at $p\in\mathcal{N}$ is fully determined by $\ubar\chi|_p$ for an arbitrary cross-section $\Sigma\subset\mathcal{N}$ containing $p$. This is also clearly the case for the induced metric $\gamma$ on $\mathcal{N}$. We conclude therefore, at any point $p\in\mathcal{N}$, the trace $\tr\ubar\chi(p)$ defines a smooth function on $\mathcal{N}$ dependent only on $\ubar L$.
\end{remark}

Considering this remark, similarly as for the metric $\gamma$ on $\mathcal{N}$, we will interchangeably denote by $\ubar \chi$ the second fundamental form of $\mathcal{N}$, and its restriction to a cross-section $\Sigma\subset\mathcal{N}$. \\

Suppose $\mathcal{N}$ is endowed with a foliation $\{\Sigma_s\}_s\subset\mathcal{N}$ along $\ubar L$. If the associated data is given by $(\gamma_s,\ubar\chi_s,\chi_s,\zeta_s)$ on $\Sigma_s$, then, for a cross-section $\Sigma:=\{s=\omega\}$, $\omega\in\mathcal{F}(\Sigma_0)$ we have:

\begin{lemma}\label{l1}
Suppose $\mathcal{N}$ is endowed with a foliation $\{\Sigma_s\}_s\subset\mathcal{N}$ along $\ubar L$. At a point $p\in\Sigma\cap\Sigma_{s(p)}$ we have by a slight abuse of notation: 
\begin{align*}
\zeta(V_\omega) &= \zeta_s(V)-\ubar\chi_s(V,\nabla_s\omega)+\kappa\langle V,\nabla_s\omega\rangle,\\
\tr\chi&=\tr \chi_s-4\zeta_s(\nabla_s\omega)-2\Delta\omega+\tr \ubar\chi_s|\nabla\omega|^2-2\kappa|\nabla\omega|^2.
\end{align*}
\end{lemma}
\begin{proof}
See, for example \cite{R,S}.
\end{proof}\newpage
\begin{proposition}[Structure Equations]\label{p3} Along a foliation $\{\Sigma_s\}_s\subset\mathcal{N}$, with data $(\gamma_s,\ubar\chi_s,\chi_s,\zeta_s)$ associated to $\ubar L$:
\begin{align}
\ubar L\mathcal{K}_s &= -\tr\ubar\chi_s\mathcal{K}_s-\frac12\Delta_s \tr\ubar\chi_s+\nabla_s\cdot(\nabla_s\cdot\hat{\ubar\chi}_s)\\
\pounds_{\ubar L}\gamma_s &= 2\ubar\chi_s\label{e4}\\
\pounds_{\ubar L}\ubar\chi_s &= -\ubar\alpha_s + \frac12|\hat{\ubar\chi}_s|^2\gamma_s + \tr\ubar\chi_s\hat{\ubar\chi}_s+\frac14(\tr\ubar\chi_s)^2\gamma_s + \kappa\ubar\chi_s\\
\ubar L\tr\ubar\chi_s &= -\frac12(\tr\ubar\chi_s)^2 - |\hat{\ubar\chi}_s|^2 - G(\ubar L,\ubar L) +\kappa \tr\ubar\chi_s\label{e6}\\
\pounds_{\ubar L}\chi_s&=\Big(\mathcal{K}_s+\hat{\ubar\chi}_s\cdot\hat{\chi}_s+\frac12G(\ubar L,L_s)\Big)\gamma_s+\frac12\tr\ubar\chi_s\hat{\chi}_s+\frac12\tr\chi_s\hat{\ubar\chi}_s\\
&\qquad-\hat{G}_s-2\text{Sym}(\nabla_s\zeta_s)-2\zeta_s\otimes\zeta_s-\kappa\chi_s\notag\\
\ubar L\tr\chi_s &= G(\ubar L,L)+2\mathcal{K}_s-2{\nabla}_s\cdot\zeta_s-2|\zeta_s|^2-\langle\vec{H}_s,\vec{H}_s\rangle-\kappa\tr\chi_s\label{e8}\\
\pounds_{\ubar L}\zeta_s &= G_{\ubar L} - \nabla_s\cdot\hat{\ubar\chi}_s - \tr\ubar\chi_s\zeta_s+\frac12{d_s}\tr\ubar\chi_s+{d_s}\kappa\label{e9}
\end{align}
where $\nabla_s$ denotes the induced covariant derivative on $\Sigma_s$, $\mathcal{K}_s$ denotes the Gauss curvature of $\Sigma_s$, $\ubar\alpha_s$ is the symmetric 2-tensor given by $\ubar\alpha_s(V,W) = \langle R_{\ubar L V}\ubar L,W\rangle$, $\text{Sym}(T)$ represents the symmetric part of a 2-tensor $T$, $G_{\ubar L} = G(\ubar L,\cdot)|_{\Sigma_s}$, and $\hat{G}_s = G|_{\Sigma_s}-\frac12(\tr_{\gamma}G)\gamma$.
\end{proposition}
\begin{proof}
See, for example \cite{R,G}.
\end{proof}
Armed with the Structure equations of $\mathcal{N}$, we are ready to distinguish between Non Expanding Horizons and Null Cones. For either structure we will henceforth assume the existence of a rescaling $\ubar L_a=a\ubar L$, $a>0$, such that $S_+(a)=\infty$, and $\ubar L_a$ is geodesic (i.e. $\ubar L\log a = -\kappa$, $D_{\ubar L_a}\ubar L_a=0$).
\begin{definition}\label{d8}
We say $\mathcal{N}$ is a Null Cone, denoted $\Omega$, if $\tr\ubar\chi>0$ throughout $\Omega$.
\end{definition}

From equation (\ref{e4}), we also observe $\frac{d}{ds}(dA_s) = \tr\ubar\chi_s dA_s$ for the area form $dA_s$ on $\Sigma_s$, so along $\ubar L$ we have pointwise area expansion. For any re-scaling $\ubar L_a = a\ubar L$, $a>0$, we consequently observe $\tr\ubar\chi_a = a\tr\ubar\chi>0$, and the same pointwise area expansion follows for an associated foliation $\{\Sigma_{s_a}^a\}$ hence the name \textit{Null Cone}. From now on, we will always assume that $\ubar L$ is a past-pointing null pregeodesic generator of $\Omega$, so that $\Omega$ is a past-directed Null Cone.

\begin{figure}[h]
\centering
\includegraphics[width=0.75\textwidth]{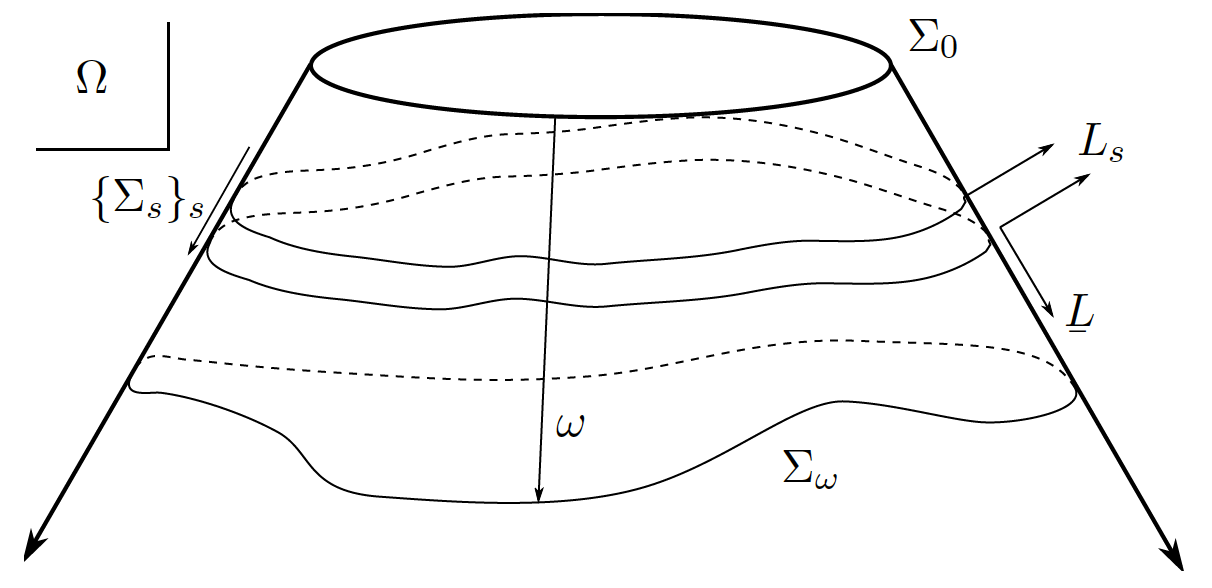}
\end{figure}
\newpage
\begin{definition}\label{d9}
We say $\mathcal{N}$ is a Non Expanding Horizon (NEH), denoted $\mathcal{H}$, if $\tr\ubar\chi = 0$ throughout $\mathcal{H}$.
\end{definition}
Onwards, when considering a NEH $\mathcal{H}$, we will always assume that $\ubar L$ is a future-pointing pregeodesic generator. We observe in a spacetime, this identifies a hypersurface for which any cross-section exhibits pointwise area conservation along orthogonal light ray propagation. In other-words, light rays are marginally trapped.
\begin{lemma}\
\begin{enumerate}
\item	$\mathcal{N}$ is a Null Cone if and only if there exists a cross-section $\Sigma\subset\mathcal{N}$ expanding along $\ubar L$, i.e. $\tr\ubar\chi|_\Sigma>0$.
\item  $\mathcal{N}$ is a NEH if and only if there exists a cross-section $\Sigma\subset\mathcal{N}$ that is marginally trapped along $\ubar L$, i.e. $\tr\ubar\chi|_{\Sigma}=0$.
\end{enumerate}
\end{lemma}
\begin{proof}
 Without loss of generality, we take $\ubar L$ to be a geodesic generator with a given geodesic $\beta:=\beta_p^{\ubar L}$, giving $\kappa\equiv 0$. Provided $|\tr\ubar\chi\circ\beta(s_1)|>0$ for some $s_1$, we have from (\ref{e6}):
$$\frac{d}{ds}\frac{1}{\tr\ubar\chi}\circ \beta(s) = \frac12+\frac{|\hat{\chi}|^2+G(\ubar L,\ubar L)}{\tr\ubar\chi^2}\circ\beta(s)\geq \frac12\implies \frac{1}{\tr\ubar\chi}\circ\beta(s)-\frac{1}{\tr\ubar\chi}\circ\beta(s_1)\geq \frac12(s-s_1)$$
as long as $|\tr\ubar\chi\circ\beta(s)|>0$, for all $s\geq s_1$. It follows that our assumption $S_+ = \infty$ rules out the possibility that $\tr\ubar\chi\circ\beta(s_1)<0$, since otherwise we observe $\lim_{s\to s_2^-}\tr\ubar\chi\circ\beta(s) = -\infty$ for some $s_1<s_2<\infty$, in contradiction to the smoothness of $\mathcal{N}$. Therefore, $\tr\ubar\chi\circ\beta\geq 0$ throughout $\mathcal{N}$. We now specialize to the two cases above:
\begin{enumerate}
\item If we parametrize so that $\beta(s_0)\in \Sigma$, (\ref{e6}) immediately implies that $\tr\ubar\chi\circ\beta(s)>0$ for $s<s_0$. Moreover, $\frac{1}{\tr\ubar\chi}\circ\beta(s)\geq \frac12(s-s_0)+\frac{1}{\tr\ubar\chi}\circ\beta(s_0)>0$ giving $\tr\ubar\chi\circ\beta(s)>0$ for $s\geq s_0$. 
\item If $\tr\ubar\chi\circ\beta(s_1)>0$ for any $s_1$, then we've already argued that $\tr\ubar\chi\circ\beta>0$ contradicting the hypothesis. It therefore follows that $\tr\ubar\chi\circ\beta(s)=0$ for all $s$.
\end{enumerate}
\end{proof}
For a NEH $\mathcal{H}$, in contrast to a Null Cone $\Omega$, it will also be helpful to denote the future pointing pregeodesic generator by $l$ instead of $\ubar L$. For an arbitrary cross-section $\Sigma\subset\mathcal{H}$ we will also denote the transverse null vector field by $k$ instead of $L$ (recall $\langle l,k\rangle = 2$), the associated data will be denoted:
$$\ubar\chi\to\chi_l,\,\,\tr\ubar\chi\to\theta_l,\,\,\zeta\to -t,\,\,\chi\to\chi_k,\,\,\tr\chi\to\theta_k.$$
From our Lemma above, it follows that $\theta_l\equiv 0$ on $\mathcal{H}$.\\
\indent From equation (\ref{e6}), also known as the famous \textit{optical Raychaudhuri equation}, we notice that since $\theta_l\equiv0$ we have $|\hat{\chi}_l|^2+G(l,l)\equiv 0$, giving $\chi_l = 0$, and $G(l,\cdot)=f_3\langle l,\cdot\rangle$ for some $f_3\in\mathcal{F}(\mathcal{H})$. We conclude that $\langle D_XY,l\rangle = -\chi^l(X,Y) = 0$ for any $X,Y\in\Gamma(T\mathcal{H})$. It follows therefore that $D_XY\in\Gamma(T\mathcal{H})$, so that the ambient connection $D$ restricts to a connection on $\mathcal{H}$. We also observe, as a consequence of equation (\ref{e4}), any two cross-sections $\Sigma_1,\Sigma_2\subset\mathcal{H}$ are isometric.
\begin{remark}\label{r3}
Given a background foliation $\{\Sigma_s\}_s\subset \mathcal{H}$, and a basis extension $\{X_i\}\subset E(U)$ ($U\subset\Sigma_0$), we can define $\vec{G}_l:=G(l,X_i)\gamma^{ij}X_j$ and therefore the vector field $\frac14l-\epsilon\vec{G}_l-\epsilon^2|\vec{G}_l|^2k_s$ is future pointing and null for any $\epsilon,s$. From the fact that $G(l,l)=0$, we have for any $\epsilon>0$ according to the DEC that
$$G(l,\frac14l-\epsilon\vec{G}_{l}-\epsilon^2|\vec{G}_{l}|^2k)=-\epsilon|\vec{G}_{l}|^2(1+\epsilon G(l,k))\geq0.$$
This is impossible unless also $\vec{G}_l\equiv0$, giving $G(l,X) = 0$ for any $X\in\Gamma(T\mathcal{H})$.
\end{remark}

\begin{figure}[h]
\centering
\includegraphics[width=0.4\textwidth]{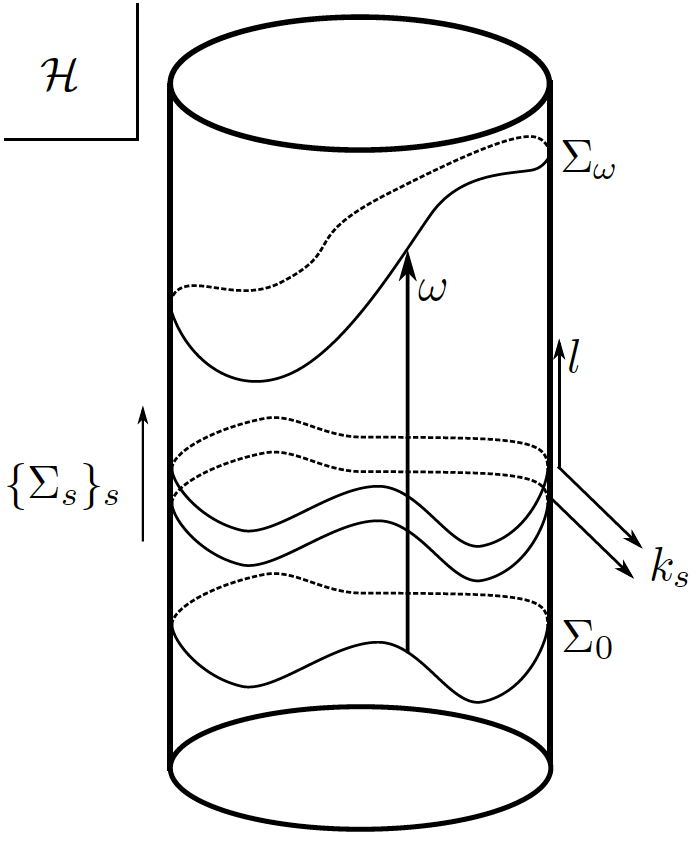}
\end{figure}

\subsection{Weakly Isolated Horizons}
\indent NEHs have been extensively studied in the literature \cite{PhysRevLett.85.3564, PhysRevD.64.044016, 0264-9381-19-6-311, PhysRevD.62.104025, G, Hai1973, Ja} arising as a quasi-local model for a black hole event horizon in relativity theory. Since the precise location for an event horizon requires a full understanding of the future evolution of the spacetime, a quasi-local model serves as a convenient approximation. Arguably the most prolific and most restrictive example is a \textit{Killing Horizon}. A Killing Horizon is a null hypersurface with null pregedesic generator given by $l=\xi|_\mathcal{H}$, for $\xi$ a Killing field of $\mathcal{M}$. Satisfying the Killing equation:  
$$\langle D_X\xi,Y\rangle+\langle D_Y\xi,X\rangle = 0,$$
for any $X,Y\in\Gamma(T\mathcal{M})$, we immediately observe Definition \ref{d9} and its consequences on $\chi_l, G(l,\cdot)$ when we restrict to $\mathcal{H}$. Killing Horizons arise from the study of stationary space-times in general relativity. While an invaluable idealization, stationary spacetimes do not allow for gravitational radiation or collapse. It is strongly expected that isolated gravitating systems will rapidly approach a stationary spacetime as the equilibrium state, nonetheless the space-time will not be stationary. An equilibrium black hole horizon in the NEH family constructed independently of the ambient geometry was given in \cite{PhysRevLett.85.3564}, called an \textit{Isolated Horizon} (IH) (see also \cite{doi:10.1139/p05-063}). Specifically, an IH is a NEH admitting a pregeodesic generator $l$ such that the 2-tensor $[\pounds_l,D]$ satisfies: 
$$([\pounds_l,D]X)(Y):=[l,D_YX]-D_{[l,Y]}X-D_Y[l,X]=0$$
for any $X,Y\in\Gamma(T\mathcal{H})$. Similarly to a Killing Horizon, an IH observes a preferred pregeodesic generator $l$ (up to constant re-scaling) along which the induced metric and connection are `time-independent'. In contrast to a Killing Horizon, the generator $l$ for an IH holds no a-priori constraints from a local extension in a neighborhood of $\mathcal{H}$. Spacetimes admitting an isolated horizon need not admit any Killing fields, see \cite{Lewandowski_2000}. The famous Robinson-Trautman spacetime supports a global isolated horizon, $\mathcal{H}$, with gravitational radiation in every neighborhood of $\mathcal{H}$, see \cite{chrusciel1992RobTra}. In \cite{PhysRevLett.85.3564}, it was also observed that IHs give rise to a well defined action principle and Hamiltonian formalism. Thus allowing for an extension of the famous laws of black hole thermodynamics from Killing Horizons to IHs. In order to use as few as possible assumptions on our NEH $\mathcal{H}$, we can further broaden the class of viable choices beyond that of an IH. The class of NEH that weakens the constraints of an IH just beyond those we'll need to prove existence of a quasi-round foliation, is interestingly also the weakest construction that allows one to extend the zeroth law of black hole thermodynamics, namely that of constant surface gravity. 
\begin{definition}\label{d10} (see \cite{0264-9381-19-6-311})\
 We say a NEH $\mathcal{H}$ is a Weakly Isolated Horizon (WIH) if we can find a pregeodesic generator, $l$, such that the tensor $[\pounds_l,D]l$ is trivial.
\end{definition}
\begin{lemma}\label{l2}
A pregeodesic generator, $l$, of a NEH, $\mathcal{H}$, defines a WIH structure if and only if its associated surface gravity $\kappa_l$ is constant on $\mathcal{H}$.
\end{lemma}
\begin{proof}
We show $[\pounds_l,D]l=d\kappa_l\otimes l$. It suffices to verify this equality for a convenient frame at any $q\in\mathcal{H}$. We choose this frame using a local extension basis $\{X_i\}$, and $l$ itself. Firstly we see
$$([\pounds_l, D]l)(X_i) = [l,D_{X_i}l]=[l,t(X_i)l]=X_i\kappa_ll,$$
having used $D_{X_i}l = (\chi_l)_{ij}\gamma^{jk}X_k+t(X_i)l = t(X_i)l$ in the second equality, and (\ref{e9}) in the third. Similarly, 
$$([\pounds_l,D]l)(l)=[l,D_ll] = [l,\kappa_ll]=l\kappa_ll.$$
\end{proof}
As a result of Lemma \ref{l1}, a NEH $\mathcal{H}$ can easily be given a WIH structure for any choice of $\kappa_l$. By taking a geodesic tangent vector field, $l'$, with some associated foliation $\{\Sigma'_s\}\subset\mathcal{H}$, we obtain our desired surface gravity $\kappa_l$ from $l:=fl'$ whereby $f = \kappa_ls+\phi$ for some $\phi\in\mathcal{F}(\Sigma_0)$. This freedom is not only too broad, the non-uniqueness is not particularly useful for physical applications (see \cite{0264-9381-19-6-311}, Section IV). So as in \cite{0264-9381-19-6-311}, we will need a more refined WIH structure on $\mathcal{H}$. 
\begin{definition}\label{d11}
We say a WIH structure on a NEH, $\mathcal{H}$, is optically rigid if:
\begin{enumerate}
\item The pregeodesic generator $l$ induces a foliation such that $lG(l,k_s)=0$
\item There exists a cross-section, $\Sigma\subset\mathcal{H}$, such that $\theta_k>0,\,l\theta_k=0$.
\end{enumerate} 
\end{definition}
Before un-packing Definition \ref{d11}, we will need the following consequence of Lemma \ref{l1}:
\begin{lemma}(\cite{M1}, Lemma 3)\label{l3}\\
Consider a NEH with background foliation $\{\Sigma_s\}_s\subset\mathcal{H}$, and the cross-section $\Sigma:=\{s=\omega\}\subset \mathcal{H}$, $\omega\in\mathcal{F}(\Sigma_0)$, with adapted null normal $k_\omega\in\Gamma(T^\perp\Sigma)$. Then the map $\pi|_\Sigma:\Sigma\to\Sigma_{0}$ is an isometry, and:
\begin{align}
t_\omega(V_\omega) &= t_s(V)-\kappa_l V\omega\\
(\theta_k)_\omega&=\theta_k - 2\Delta\omega-2\kappa_l|\nabla\omega|^2+4t_s(\nabla_s\omega)\label{e11}.
\end{align} 
\end{lemma}
We start by noticing the first condition in Definition \ref{d11} is independent of $l$. In-fact, from an arbitrary cross-section $\Sigma:=\{s=\omega\}\subset\mathcal{H}$, $\omega\in\mathcal{F}(\Sigma_0)$, we may initiate a flow along $l$, $\Sigma_t:=\{s=\omega+t\}$, $t\geq0$. Each $\Sigma_t$ has adapted null normal $k_{\omega+t}=k_s+|\nabla(\omega+t)|^2l-2\nabla(\omega+t) = k_s+|\nabla\omega|^2l-2\nabla\omega$. As a result of Remark \ref{r3}, $G(l,k_{\omega+t}) = G(l,k_s)$, thus $lG(l,k_{\omega+t}) = lG(l,k_s)=0$. Moreover, for a re-scaling $l\to fl$ with associated flow $\Sigma_u:=\{s=\omega_u\}$, the adapted null normal is of the form $k_u:=\frac{1}{f}(k_s+|\nabla\omega_u|^2l-2\nabla\omega_u)$ giving $G(fl,k_u) = G(l,k_s)$. Therefore, again $flG(fl,k_u) = flG(l,k_s)=0$.\\
\indent 
Combining the fact that $\kappa_l$ is constant and that $\pi$ restricts to an isometry on arbitrary cross-sections, we have from (\ref{e8},\ref{e9}) 
$$l^2\theta_k = lG(l,k)-\kappa_ll\theta_k=-\kappa_ll\theta_k.$$
So for any foliation $\{\Sigma_s\}_s$ along $l$, we observe $l\theta_{k_s} = (l\theta_{k_0})e^{-\kappa_ls}$. From the second condition of Definition \ref{d11} we have at least one foliation along $l$ for which $\theta_k$ remains constant. If we assume this particular foliation as a background foliation, then we may again initiate a flow along $l$ off of an arbitrary $\{s=\omega\}\subset\mathcal{H}$. From Lemma \ref{l3} we notice $(\theta_k)_{\omega+t} = (\theta_k)_\omega$. So all foliations along $l$ conserve $\theta_k$. Unlike the first condition, we will now find that re-scalings $l\to fl$ will modify $l\theta_k$ by a second order elliptic equation in the function $f$. This is to be expected from (\ref{e11}) and leads us to the \textit{Stability Operator} of a MOTS. Regarding the $\theta_k>0$ assumption in Definition \ref{d11}, we make this assumption in order to simplify the notion of stability for a WIH structure. Since we will also be dealing with cross-sections of $\mathcal{H}$ that can be connected, via a Null Cone $\Omega$ to past null infinity, Definition \ref{d8} compels that $\theta_k>0$ for some $\Sigma\subset\mathcal{H}$. Moreover, once we specify stability on an optically rigid WIH structure we will see that the associated pregeodesic $l$ is now unique (up to constant rescaling).

\subsection{The Stability Operator}
Assume we have a 2-sphere embedding $\iota:\mathbb{S}^2\hookrightarrow\mathcal{M}$ and a differentiable map $\Phi: \mathbb{S}^2\times(-\epsilon,\epsilon)\to \mathcal{M}$ such that $\Phi(\cdot, t)$ is an immersion and $\alpha_p(t):=\Phi(p,t)$, $p\in\mathbb{S}^2$, is a curve satisfying $\alpha_p(0) = \iota(p)$ with initial velocity $\alpha_p'(0) = \psi(p)(\frac12\ubar L_{\iota(p)}+\phi(p)L_{\iota(p)})$ for smooth functions $\psi,\phi\in\mathcal{F}(\mathbb{S}^2)$. If we now consider the family of 2-spheres, $\Sigma_t:=\Phi(\mathbb{S}^2,t)$, and a corresponding family $t\to f_t\in\mathcal{F}(\Sigma_t)$ differentiable in $t$, $f:=f_0$, we denote the variation:
$$\big(\delta_{\psi(p)(\frac12\ubar L_{\iota(p)}+\phi(p)L_{\iota(p)})}f\big)(p):=\frac{d}{dt}|_{t=0}(f_t\circ\Phi)(p,t).$$
Next we will need to introduce the notion of a stable MOTS. To this end, assume we have a 2-sphere $\Sigma\subset\mathcal{M}$ satisfying $\tr\chi =0$. Denoting $\nu:=\frac12\ubar L+\phi L$, it follows (see \cite{andersson2008}, Lemma 3.1) that the linearization satisfies $\delta_{\psi\nu}\tr\chi= \mathcal{L}_\nu(\psi)$:
\begin{definition}\label{d12} Given a MOTS $\Sigma$ in $\mathcal{M}$,
$$\mathcal{L}_\nu(\psi):=-\Delta\psi-2\nabla\cdot(\psi\zeta)+\Big(\mathcal{K}+\nabla\cdot\zeta-|\zeta|^2+G(\nu,L)+\phi|\hat{\chi}|^2\Big)\psi$$
is called the Stability Operator of $\Sigma$ along $\nu=\frac12\ubar L+\phi L$. It's adjoint with respect to the $L^2$ inner product on $\Sigma$ is also given by
$$\mathcal{L}_\nu^\star(\psi) = -\Delta\psi+2\zeta(\nabla\psi)+\Big(\mathcal{K}+\nabla\cdot\zeta-|\zeta|^2+G(\nu,L)+\phi|\hat{\chi}|^2\Big)\psi.$$
\end{definition}
Like any second order elliptic PDE on a compact Riemannian manifold, the stability operator $\mathcal{L}_\nu$ admits a principal real-valued eigenvalue $\mu$ with one-dimensional eigenspace of the form $\{c\varphi\}$ where $0<\varphi\in\mathcal{F}(\Sigma)$ is smooth, $c\in\mathbb{R}$. Moreover, both $\mathcal{L}_\nu$, $\mathcal{L}_\nu^\star$ share a common principal eigenvalue (see \cite{andersson2008}, Lemma 4.1). We say $\Sigma$ is \textit{stable (strictly stable)} if $\mu\geq(>)0$ and \textit{unstable} if $\mu<0$. We leave it to the reader to verify that any scale change $\ubar L\to a\ubar L$, $L\to \frac{1}{a}L$ for some smooth function $a>0$ results in the stability operator (along the same direction, i.e. $\phi\to a^2\phi$) changing to 
$$\mathcal{L}_\nu^a(\psi) = \frac{1}{a}\mathcal{L}_\nu(a\psi).$$
From this we also conclude that stability is invariant under scale variations since the new principle eigenfunction is given by $\varphi_a = \frac{1}{a}\varphi>0$. For further discussion regarding the notion of stability we refer the reader to \cite{andersson2008}. Throughout this paper, ${C}^{k,\alpha}(\Sigma)$ refers to the space of $k$-times differentiable functions on our 2-sphere $\Sigma$ with $k^{\text{th}}$ partial derivatives being H\"{o}lder continuous with exponent $0<\alpha\leq 1$. The (irrelevant for our purposes) H\"{o}lder seminorm $|\cdot|_\alpha$ depending on some fixed background metric (i.e. the standard round metric will do).
\begin{lemma}(\cite{andersson2008}, Lemma 4.2)\label{l4}\
Let $\mathcal{L}_\nu$ be the stability operator of a MOTS $\Sigma$. Let $\mu$ and $\varphi>0$ be the principal eigenvalue and eigenfunction of $\mathcal{L}_\nu$, respectively, and let $\psi\in C^{2,\alpha}(\Sigma)$ be a solution of $\mathcal{L}_\nu(\psi) = f$ for some function $0\leq f\in C^{0,\alpha}(\Sigma)$. Then the following holds,
\begin{enumerate}
\item If $\mu=0$, then $f\equiv 0$ and $\psi = C\varphi$ for some constant C.
\item If $\mu>0$ and $f\not\equiv0$, then $\psi>0$.
\item If $\mu>0$ and $f\equiv 0$, then $\psi\equiv0$.
\end{enumerate}
\end{lemma}
We observe that both the operators $\mathcal{L}_\nu,\mathcal{L}_\nu^\star: C^{2,\alpha}(\Sigma)\to C^{0,\alpha}(\Sigma)$ have smooth coefficients on a compact manifold and by standard results are therefore bounded. As a consequence of the Fredholm Alternative and the Bounded Inverse Theorem, Lemma \ref{l4} therefore ensures that whenever $\mu>0$ both operators have bounded linear inverses.\\\\
\indent As seen in \cite{M1}, we now notice that $\mathcal{L}_{\nu}=\mathcal{L}_{\frac12\ubar L}+\phi\Big(|\hat{\chi}|^2+G(L,L)\Big)$. We will always assume that the stability operator is defined for $\ubar L$ past-pointing. Therefore, for a cross-section of a NEH, $\Sigma\subset\mathcal{H}$, the stability operator is calculated for variations of $\theta_l$ along $\nu = \frac12k+\phi l$. Since $\chi_l\equiv 0$, $G(l,l)=0$, the stability of any cross-section is therefore fully described by the operator 
$$\mathcal{L}_{\frac12k}(\psi) = -\Delta\psi-2\nabla\cdot(\psi t)+\Big(\mathcal{K}+\nabla\cdot t-|t|^2+\frac12G(l,k)\Big)\psi.$$ 
We will only consider MOTSs satisfying $|\hat{\chi}|^2+G(L,L)=0$, so for the remainder of this work we will omit the subscripts on the stability operator for the purposes of identifying a direction of perturbation. \\\\
\indent As a result of the optically rigid structure of Definition \ref{d11} on a WIH, the stability operator simplifies using (\ref{e8}):
$$\mathcal{L}(\psi) = -\Delta\psi-2\nabla\cdot(\psi t)+\frac12\kappa_l\theta_k\psi.$$
We now see that the optically rigid WIH structure is enough to recover \cite{M1}, Proposition 3, and to use \cite{M1}, Proposition 4 to impose stability on all of $\mathcal{H}$.
\begin{proposition}\label{p4}
Consider an optically rigid WIH structure on a NEH, $\mathcal{H}$, and a cross-section, $\Sigma\subset\mathcal{H}$. Then,
\begin{enumerate}
\item If $\kappa_l>0$, $\Sigma$ is strictly stable,
\item If $\kappa_l=0$, $\Sigma$ is marginally stable,
\item If $\kappa_l<0$, $\Sigma$ is unstable.
\end{enumerate}
\end{proposition} 
\begin{proof}
We denote by $\Sigma_0\subset\mathcal{H}$ the cross-section of Definition \ref{d11} with $\theta_k>0$. Denoting the stability operator associated to $\Sigma_0$ by $\mathcal{L}_0$, with a given principal eigenfunction $\varphi_0>0$, and principal eigenvalue $\mu_0$, we observe upon integration
$$\mu_0\int_{\Sigma_0}\varphi_0dA = \int_{\Sigma_0}\mathcal{L}(\varphi_0)dA = \frac12\kappa_l\int_{\Sigma_0}\theta_k\varphi_0dA.$$
So $\mu_0$ takes the same sign as $\kappa_l$ and the claim follows for $\Sigma_0$. For an arbitrary cross-section $\Sigma=\{s=\omega\}$, $\omega\in\mathcal{F}(\Sigma_0)$, and function $\psi\in\mathcal{F}(\Sigma_0)$, we see from Lemma \ref{l3} that the stability operator takes the form:
\begin{align*}
\mathcal{L}(\psi\circ\pi) &=\Big( -\Delta_0\psi-2\nabla_0\cdot(\psi (t_0-\kappa_ld\omega))+\frac12\kappa_l\Big(\theta_k-2\Delta_0\omega-2\kappa_l|\nabla_0\omega|^2+4t_0(\nabla_0\omega)\Big)\psi\Big)\circ\pi\\
&=\Big(-\Delta_0\psi-2\nabla_0\cdot(\psi t_0)+\frac12\kappa_l\theta_k\psi+2\kappa_l\langle\nabla_0\psi,\nabla_0\omega\rangle-\psi e^{\kappa_l\omega}\Big(\Delta_0 e^{-\kappa_l\omega}+2t_0(\nabla_0 e^{-\kappa_l\omega})\Big)\Big)\circ\pi\\
&=e^{\kappa_l\omega}\Big(-(e^{-\kappa_l\omega}\Delta_0\psi+2\langle\nabla_0\psi,\nabla_0 e^{-\kappa_l\omega}\rangle+\psi\Delta_0 e^{-\kappa_l\omega})\\
&\qquad\qquad\qquad\qquad\qquad\qquad\qquad -2(e^{-\kappa_l\omega}\nabla_0\cdot(\psi t_0)+\psi t_0(\nabla_0 e^{-\kappa_l\omega}))+\frac12\kappa_l\theta_k\psi e^{-\kappa_l\omega}\Big)\circ\pi\\
&=e^{\kappa_l\omega}\Big(-\Delta_0(\psi e^{-\kappa_l\omega})-2\nabla_0\cdot(\psi e^{-\kappa_l\omega}t_0)+\frac12\kappa_l\theta_k\psi e^{-\kappa_l\omega}\Big)\circ\pi\\
&=e^{\kappa_l\omega}\mathcal{L}_0(\psi e^{-\kappa_l\omega})\circ\pi.
\end{align*}
It follows that $\varphi = (\varphi_0e^{\kappa_l\omega})\circ\pi$ is a principle eigenfunction for $\Sigma$ with principle eigenvalue $\mu=\mu_0$. We conclude that $\Sigma$ inherits the stability of $\Sigma_0$.
\end{proof}

\begin{definition}
We say an optically rigid WIH is strictly stable, marginally stable, or unstable whenever $\kappa_l>0$, $\kappa_l=0$, or $\kappa_l<0$ respectively.
\end{definition}
\begin{proposition}\label{p5}
A NEH, $\mathcal{H}$, admits only one strictly stable, optically rigid WIH structure.
\end{proposition}
\begin{proof}
We need to show whenever a NEH admits two optically rigid WIH structures, both of which are strictly stable with associated pregeodesic generators $l,l'$, then $l = cl'$ for some constant $c\neq0$. \\
\indent It is necessarily the case that $l'=fl$ for some $f\in\mathcal{H}$, $f\neq 0$. For any cross-section $\Sigma\subset\mathcal{H}$, the normal frame $\{l,k\}\subset\Gamma(T^\perp\Sigma)$ boosts to the normal frame $\{l',k'\}=\{fl,\frac{1}{f}k\}$. Relative to the frame $\{l',k'\}$, the  connection 1-form is given by $t'(v) = \frac12\langle D_v(k'),l'\rangle = \frac12\langle D_v(\frac{1}{f}k),fl\rangle = t(v)-v\log|f|$. Therefore, we observe from (\ref{e8}):
\begin{align*}
0&=\frac12\kappa_l l'\theta_{k'}\\
& = \kappa_l\Big(\frac12G(l',k')+\mathcal{K}+\nabla\cdot t'-|t'|^2-\frac12\kappa_{l'}\theta_{k'}\Big)\\
&=\kappa_l\Big(\frac12G(l,k)+\mathcal{K}+\nabla\cdot(t-d\log|f|)-|t-d\log|f||^2-\frac12\kappa_{l'}\frac{1}{f}\theta_k\Big)\\
&=\frac{\kappa_l}{f}\Big((\frac12G(l,k)+\mathcal{K}+\nabla\cdot t-|t|^2)f-\Delta f+2t(\nabla f)-\frac12\kappa_{l'}\theta_k\Big)\\
&=\frac{\kappa_l}{f}\Big(-\Delta f+2t(\nabla f)+\frac{1}{2}(\kappa_{l}f-\kappa_{l'})\theta_k\Big)\\
&=\frac{1}{f}\mathcal{L}^\star(\kappa_l f-\kappa_{l'}).
\end{align*} 
Since $\Sigma$ is strictly stable, $\mathcal{L}^\star$ has bounded inverse and we conclude that $f = \frac{\kappa_{l'}}{\kappa_l}$. Since $\Sigma$ was arbitrary, $f=\frac{\kappa_{l'}}{\kappa_l}$ throughout $\mathcal{H}$.
\end{proof}

\section{Existence and Stability of quasi-round MOTS}
\subsection{Foliation by quasi-round MOTS}
In this section we will show the first two of our main results. With the technology constructed in the previous section we are pretty much ready to show that a stable, optically rigid WIH admits a unique foliation by quasi-round MOTS. First we need two lemmata:
\begin{lemma}\label{l5}
For any cross-section $\Sigma:=\{s=\omega\}\subset\mathcal{H}$, $\omega\in\mathcal{F}(\Sigma_0)$, of a WIH, and $V,W\in E(\Sigma)$,
\begin{enumerate}
\item $\nabla_{\tilde V}\tilde W = (\nabla_s)_VW+((\nabla_s)_VW\omega) l$
\item $(\nabla_{\tilde V}t_\omega)(\tilde W) = ((\nabla_s)_Vt_s)(W) - \nabla_s^2(\kappa_l\omega)(V,W)$
\end{enumerate}
where $\nabla_s^2$ is the Hessian associated to $\nabla_s$ on $\Sigma_s$.
 \end{lemma}
 \begin{proof}
 For the first part of the proof it suffices to show that $\langle \nabla_{\tilde V}\tilde W,U\rangle = \langle (\nabla_s)_VW,U\rangle$ for any $U\in E(\Sigma)$:
 \begin{align*}
 \langle \nabla_{\tilde V}\tilde W, U\rangle&=\langle D_{\tilde V}\tilde W+\frac12\chi_k(\tilde V,\tilde W)l, U\rangle\\
 &=\langle D_{\tilde V}\tilde W, U\rangle\\
 &=\tilde V\langle W,U\rangle-\langle \tilde W, D_{\tilde V}U\rangle\\
 &=V\langle W,U\rangle - \langle W, \nabla_VU\rangle+\Big(V\omega l\langle W,U\rangle - W\omega\langle l,D_{\tilde V} U\rangle-V\omega\langle W,D_l U\rangle\Big)\\
 &=V\langle W,U\rangle - \langle W, D_VU\rangle\\
 &=\langle (\nabla_s)_VW,U\rangle
 \end{align*}
 where the third and final terms of the forth equality vanish due to $\chi_l =0$, and  the forth vanishes since $D$ restricts to a connection on $\mathcal{H}$.\\
For the second part of the lemma, we have from Lemma \ref{l3}, (\ref{e9}), and the result above:
\begin{align*}
(\nabla_{\tilde V}t_\omega)(\tilde W)&=\tilde Vt_\omega(\tilde W) - t_\omega(\nabla_{\tilde V}\tilde W)\\
&=(V+V\omega l)(t_s(W)-W(\kappa_l\omega)) - t_s(\nabla_VW)+(\nabla_s)_VW(\kappa_l\omega)\\
&=((\nabla_s)_V t_s)(W) - \nabla^2_s(\kappa_l\omega)(V,W).
\end{align*}
\end{proof}
\begin{lemma}\label{l6}
For any cross-section $\Sigma:=\{s=\omega\}\subset\mathcal{H}$, $\omega\in\mathcal{F}(\Sigma_0)$, of an optically rigid WIH, we have 
$$\frac{\kappa_l}{2}(\theta_k)_\omega = \Big(e^{-\kappa_l\omega}\mathcal{L}_0^\star(e^{\kappa_l\omega})\Big)\circ\pi.$$
\end{lemma}
\begin{proof}
From Lemma \ref{l3} and optical rigidity, we have 
\begin{align*}
\frac12\kappa_l(\theta_k)_\omega &=\Big(\frac12\kappa_l\theta_{k_0}-\Delta_0(\kappa_l\omega)-|\nabla_0(\kappa_l\omega)|^2+2t_0(\nabla_0(\kappa_l\omega))\Big)\circ\pi\\
&=e^{-\kappa_l\omega}\Big(\frac12\kappa_l\theta_{k_0}e^{\kappa_l\omega}-\Delta_0 e^{\kappa_l\omega}+2t_0(\nabla_0 e^{\kappa_l\omega})\Big)\circ\pi\\
&=\Big(e^{-\kappa_l\omega}\mathcal{L}_0^\star(e^{\kappa_l\omega})\Big)\circ\pi
\end{align*}
\end{proof}
\begin{theorem}\label{t1}
If $\mathcal{H}$ is a strictly stable, optically rigid WIH, then it admits a unique foliation by quasi-round MOTSs.
\end{theorem}
\begin{proof}
\underline{Existence}:\\
From standard results for the Laplace-Beltrami operator on compact Riemannian manifolds, the equation,
$$\Delta_0 u = \mathcal{K}_0+\nabla_0\cdot t_0-\frac{4\pi}{|\Sigma_0|}$$
is solvable on $\Sigma_0$ since the Gauss-Bonnet and Divergence theorems ensure both sides integrate to zero. From Elliptic Regularity and the Maximum Principle (see \cite{evans2010partial, gilbarg2015elliptic}) we also know that the solution $u$ is smooth and unique up to an additive constant respectively. Next, from the stability hypothesis on $\mathcal{H}$, the operator $\mathcal{L}_0^\star$ has bounded inverse so that we may solve for $\psi$ in the equation
$$\mathcal{L}_0^\star(\psi) = e^u.$$
Elliptic regularity once again ensures $\psi$ is smooth, and from Lemma \ref{l4}, $\psi>0$. Defining $\omega:=\frac{1}{\kappa_l}\log\psi$ we obtain a cross-section $\Sigma_\omega\hookrightarrow\mathcal{H}$ which by Lemma \ref{l6} satisfies $\frac12\kappa_le^{\kappa_l\omega\circ\pi}(\theta_k)_\omega = e^{u\circ\pi}$. We conclude that $(\theta_k)_\omega>0$ and therefore Lemma \ref{l6} coupled with optical rigidity gives:
\begin{align*}
\rho_\omega &=\mathcal{K}+\nabla_\omega\cdot t_\omega - \Delta_\omega\log(\theta_k)_\omega\\
&=\Big(\mathcal{K}_0+\nabla_0\cdot (t_0-d(\kappa_l\omega))-\Delta_0\log(\frac{2}{\kappa_l}e^{u-\kappa_l\omega})\Big)\circ\pi\\
&=\frac{4\pi}{|\Sigma_0|}.
\end{align*}
We also observe that $\rho_{\omega+s} = \rho_\omega$ for any constant $s$ by optical rigidity of $\mathcal{H}$.\\
\underline{Uniqueness}:\\
From the expression for $\rho_\omega$ above, in order for $\Sigma_{\omega'}$ to satisfy $\rho_{\omega'} = \frac{4\pi}{|\Sigma|}$, it necessarily has to be the case that 
$$\Big(\frac{4\pi}{|\Sigma_0|}-\mathcal{K}_0-\nabla_0\cdot t_0\Big)\circ\pi = -\Delta_0(\kappa_l\omega'\circ\pi)-\Delta_0\log(\theta_k)_{\omega'}=-\Delta_0\log(e^{\kappa_l\omega'\circ\pi}(\theta_k)_{\omega'}).$$ 
From Lemma \ref{l6} we therefore conclude that $e^{u+C}= \mathcal{L}_0^\star(\frac{2}{\kappa_l}e^{\kappa_l\omega'})$ for some constant $C$. Since $\mathcal{L}^\star$ has bounded inverse we have $\frac{2}{\kappa_l}e^{\kappa_l\omega'} = \psi e^C = e^{\kappa_l\omega+C}$ and therefore
$$\omega' = \omega+\frac{C+\log\kappa_l-\log2}{\kappa_l}.$$
We see that $\Sigma_{\omega'}$ is simply a translate of $\Sigma_\omega$ along $l$, moreover, as the constant $C$ runs through values of $\mathbb{R}$ we recover the foliation of the existence argument.
\end{proof}
\subsection{Stability}
In this section, we will assume our MOTS $\Sigma_0$ satisfies the necessary conditions allowing the construction of a Null Inflation Basis. Consequently, we will henceforth take $\mathcal{L}$ to be the stability operator along $L^-$ relative to the Null Inflation basis:
 $$\mathcal{L}(\psi) = -\Delta\psi -2\nabla\cdot(\psi\tau)+\Big(\rho_0-|\tau|^2+\frac12G(L^+,L^-)\Big)\psi$$
 where $\rho_0 = \mathcal{K}+\nabla\cdot\tau$.\newpage
\begin{proposition}\label{p6} Given an embedded surface $S\subset\mathcal{M}$, admitting a Null Inflation basis $\{L^-,L^+\}\subset\Gamma(T^\perp S)$, the following holds:
\begin{align}
\delta_{\psi L^-}\mathcal{K} &= -\psi\mathcal{K}-\frac12\Delta\psi+\nabla\cdot\nabla\cdot(\psi\hat{\chi}^-)\label{e12}\\
\delta_{\psi L^+}\mathcal{K} &=-\psi\langle\vec{H},\vec{H}\rangle\mathcal{K}-\frac12\Delta(\psi\langle\vec{H},\vec{H}\rangle)+\nabla\cdot\nabla\cdot(\psi\hat\chi^+)\label{e13}\\
\delta_{\psi L^-}\langle\vec{H},\vec{H}\rangle &= 2\mathcal{L}(\psi)-\langle\vec{H},\vec{H}\rangle\Big(\frac32+|\hat{\chi}^-|^2+G(L^-,L^-)\Big)\psi\label{e14}\\
\delta_{\psi L^+}\langle\vec{H},\vec{H}\rangle &= 2\langle\vec{H},\vec{H}\rangle\mathcal{L}^\star(\psi)-\Big(\frac32\langle\vec{H},\vec{H}\rangle^2+|\hat{\chi}^+|^2+G(L^+,L^+)\Big)\psi\label{e15}\\
\delta_{\psi L^-}\tau&=-\psi\tau-\nabla\cdot(\psi \hat{\chi}^-)+\psi G_{L^-}+d(\psi(|\hat{\chi}^-|^2+G(L^-,L^-)))\label{e16}\\
\delta_{\psi L^+}\tau &= -2d\mathcal{L}^\star(\psi)-\psi G_{L^+}+\nabla\cdot(\psi \hat{\chi}^+)-\psi\langle\vec{H},\vec{H}\rangle\tau+\frac12\Big(\langle\vec{H},\vec{H}\rangle d\psi-\psi d\langle\vec{H},\vec{H}\rangle\Big)\label{e17}
\end{align}
\end{proposition}
\begin{proof}
Firstly, we start by considering any neighborhood where $\psi\neq0$. The result follows for (\ref{e12}) directly from the structure equations by setting $L = \psi L^+,\,\, \ubar L = \frac{1}{\psi}L^-$, similarly for (\ref{e13}) by setting $L=\frac{1}{\psi}L^-,\,\,\ubar L = \psi L^+$. Toward showing (\ref{e14},\ref{e15}) we calculate
\begin{align*}
\ubar L\langle\vec{H},\vec{H}\rangle &=\ubar L(\tr\ubar\chi\tr\chi)\\
&=\Big(-\frac12\tr\ubar\chi^2-|\hat{\ubar\chi}|^2-G(\ubar L,\ubar L)+\kappa\tr\ubar\chi\Big)\tr\chi\\
&\qquad\qquad\qquad\qquad\qquad\qquad\qquad+\tr\ubar\chi\Big(G(L,\ubar L)+2\mathcal{K}-2\nabla\cdot\zeta-2|\zeta|^2-\langle\vec{H},\vec{H}\rangle-\kappa\tr\chi\Big)\\
&=-\frac32\tr\ubar\chi\langle\vec{H},\vec{H}\rangle-\tr\chi\Big(|\hat{\ubar\chi}|^2+G(\ubar L,\ubar L)\Big)+2\tr\ubar\chi\Big(\frac12G(L^-,L^+)+\mathcal{K}-\nabla\cdot\zeta-|\zeta|^2\Big).
\end{align*}
Setting $\ubar L = \psi L^{\mp},\,\,L = \frac{1}{\psi}L^{\pm}$, whereby $\zeta = d\log|\psi|\pm\tau$, we see
\begin{align*}
\psi&\Big(\frac12G(L^-,L^+)+\mathcal{K}-\nabla\cdot\zeta-|\zeta|^2\Big) \\
&= \psi\Big(\frac12G(L^-,L^+)+\rho_0-\nabla\cdot\tau-\Delta\log|\psi|\mp\nabla\cdot\tau-|\tau|^2\mp2\tau(\nabla\log|\psi|)-|\nabla\log|\psi||^2\Big)\\
&= \psi\Big(\frac12G(L^-,L^+)+\rho_0-|\tau|^2\Big)-\Delta\psi\mp2\tau(\nabla\psi)+(\mp1-1)\psi\nabla\cdot\tau\\
&=\frac12(1\pm1)\mathcal{L}(\psi)+\frac12(1\mp1)\mathcal{L}^\star(\psi).
\end{align*}
So we conclude with (\ref{e14}) by setting $\ubar L = \psi L^-$, where $\tr\ubar\chi = \psi$, and (\ref{e15}) by setting $\ubar L = \psi L^+$, where $\tr\ubar\chi=\psi\langle\vec{H},\vec{H}\rangle$.
To show (\ref{e16}) we calculate
\begin{align*}
\mathcal{L}_{\ubar L}\tau &=\mathcal{L}_{\ubar L}\zeta-d\frac{1}{\tr\ubar\chi}\Big(-\frac12\tr\ubar\chi^2-|\hat{\ubar\chi}|^2-G(\ubar L,\ubar L)+\kappa\tr\ubar\chi\Big)\\
&=G_{\ubar L}-\nabla\cdot\hat{\ubar\chi}-\tr\ubar\chi\tau+d\Big(\frac{1}{\tr\ubar\chi}(|\hat{\ubar\chi}|^2+G(\ubar L,\ubar L))\Big)
\end{align*}
and the result follows for $\ubar L = \psi L^-$. For (\ref{e17}) we observe, by switching the roles of $\ubar L$ and $L$ in the structure equations, that
\begin{align*}
\mathcal{L}_L\zeta &= -G_L+\nabla\cdot\hat{\chi}-\tr\chi\zeta-\frac12 d\tr\chi-d\kappa\\
L\log\tr\ubar\chi&=\frac{1}{\tr\ubar\chi}\Big(G(L,\ubar L)+2\mathcal{K}+2\nabla\cdot\zeta-2|\zeta|^2\Big)-\tr\chi-\kappa,
\end{align*}
giving,
\begin{align*}
\mathcal{L}_L\tau&=-G_L+\nabla\cdot\hat{\chi}-\tr\chi\zeta+\frac12 d\tr\chi-d\Big(\frac{1}{\tr\ubar\chi}\Big(G(L,\ubar L)+2\mathcal{K}+2\nabla\cdot\zeta-2|\zeta|^2\Big)\Big).
\end{align*}
The result follows by setting $L = \psi L^+,\,\, \ubar L = \frac{1}{\psi}L^-$ whereby $\zeta = \tau-d\log|\psi|$ and recalling our calculations for (\ref{e14},\ref{e15}).\\
\indent In any neighborhood where $\psi$ vanishes identically (\ref{e12})-(\ref{e17}) holds, since by construction, the geometry remains invariant. The remaining possibilities are settled by continuity of both the left and right sides of the equality in (\ref{e12})-(\ref{e17}).
\end{proof}
For the remainder of the paper we define the H\"{o}lder space $\ring{C}^{k,\alpha}(\Sigma)\subset C^{k,\alpha}(\Sigma)$ of functions such that $f\in \ring{C}^{k,\alpha}(\Sigma)$ is characterized by the property that $\int_\Sigma fdA = 0$.
\begin{theorem}\label{t2}
Given a quasi-round MOTS, $\Sigma$, such that $\delta_{L^+}\langle\vec{H},\vec{H}\rangle=0$, the following linearization holds:
$$\delta_{\psi L^++\phi L^-}
\begin{pmatrix}
\rho_0-\frac{4\pi}{|\Sigma|}\\
\langle\vec{H},\vec{H}\rangle
\end{pmatrix}
=
\begin{pmatrix}
-2\Delta\mathcal{L}^\star&G\\
0&2\mathcal{L}
\end{pmatrix}
\begin{pmatrix}
\psi\\\phi
\end{pmatrix}$$
whereby $G(\phi) = \frac{4\pi}{|\Sigma|}(\frac{\int\phi}{|\Sigma|}-\phi)+\nabla\cdot\Big(\phi(G_{L^-}-2\chi^-\circ\tau)\Big)+\Delta\Big(\phi(|\hat{\chi}^-|^2+G(L^-,L^-)-\frac12)\Big)$, and $\mathcal{L}$ is the stability operator along $L^-$. Moreover, if $\Sigma$ is strictly stable, then the linearization has bounded inverse on $\ring{C}^{k,\alpha}(\Sigma)\times C^{l,\beta}(\Sigma)$.
\end{theorem}
\begin{proof}
From the hypotheses on $\Sigma$ and (\ref{e15}), it follows that $\chi^+ = 0$ and $G(L^+,L^+)=0$, and from the Dominant Energy Condition that $G_{L^+}=0$. Therefore, the second row of the matrix representation for the linearization follows from (\ref{e14}) and (\ref{e15}) of Proposition \ref{p6}. By the first variation of area formula, we have $\delta_{\psi L^+} |\Sigma| =\int_\Sigma-\langle\vec{H},\psi L^+\rangle dA= \int_\Sigma \psi\langle\vec{H},\vec{H}\rangle dA = 0$. Therefore, using (\ref{e13}) of Proposition \ref{p6}, the first entry of the first row satisfies $\delta_{\psi L^+}\rho_0 = \delta_{\psi L^+}(\nabla\cdot\tau)$. It's a standard exercise (see, for example, \cite{R} Corollary 3.1.1) to verify that $\ubar L\nabla\cdot\tau = \tr\ubar\chi\nabla\cdot\tau-2\nabla\cdot(\hat{\ubar\chi}\circ\tau)+\nabla\cdot(\pounds_{\ubar L}\tau)$, which, for $\ubar L = \psi L^+$, gives us $\delta_{\psi L^+}\rho_0 = \nabla\cdot(\delta_{\psi L^+}\tau) = -2\Delta\mathcal{L}(\psi)$ from (\ref{e17}) of Proposition \ref{p6}. Once again, by the first variation of area formula, we have $\delta_{\phi L^-}|\Sigma| = \int_\Sigma-\langle\vec{H},\phi L^-\rangle dA = \int_\Sigma\phi dA$. The formula for $G(\phi)$ therefore follows from the formula for $\ubar L\nabla\cdot\tau$, (\ref{e12}), and (\ref{e16}).\\
\indent For the second part of our Theorem, since all operators have smooth coefficients on a compact manifold, it is a standard argument using a partition of unity to locally reduce to an operator on $\mathbb{R}^2$ (see, for example, \cite{evans2010partial, aubin2012nonlinear}) from which it follows that the linearization is a bounded operator. It suffices therefore, by way of the Bounded Inverse Theorem, to show that the linearization is a bijection. Since the linearization is upper diagonal, this in turn is equivalent to showing the operators along the diagonal of the linearization are bijective in their respective Banach spaces. By the hypothesis that $\Sigma$ be strictly stable both $\mathcal{L}$ and $\mathcal{L}^\star$ have bounded inverses so it remains to show $\Delta\mathcal{L}^\star:\ring{C}^{k,\alpha}(\Sigma)\to\ring{C}^{k-4,\alpha}(\Sigma)$ is bijective.\\
\underline{Injectivity}\\
Given $\psi\in\ring{C}^{k+4,\alpha}(\Sigma)$, $\Delta\mathcal{L}^\star(\psi) = 0$ necessitates that $\mathcal{L}^\star(\psi) = C$ for some constant $C$. Given the case that $C=0$, $\psi=0$ follows by the existence of a bounded inverse for $\mathcal{L}^\star$. On the other hand, if it happens that $C\neq 0$, then $\mathcal{L}^\star(\frac{\psi}{C})>0$ implies that $\frac{\psi}{C}>0$ by Lemma \ref{l4} and therefore $\int_\Sigma\frac{\psi}{C}dA>0$. This contradicts the fact that $\int\psi dA=0$, so $\psi =0
$ and we conclude that $\Delta\mathcal{L}^\star$ is injective.\\
\underline{Surjectivity}\\
It's a well known fact that $\Delta:C^{k,\alpha}(\Sigma)\to\ring{C}^{k-2,\alpha}(\Sigma)$ is surjective so it suffices to show, for each $u\in C^{k,\alpha}(\Sigma)$, the existence of a constant $C_u$ and $v\in\ring{C}^{k+2,\alpha}(\Sigma)$ such that $\mathcal{L}^\star{v}=u+C_u$. Once again using Lemma \ref{l4} and the fact that $\mathcal{L}^\star$ has bounded inverse, we find unique $\psi_1>0, \psi_u\in C^{k+2,\alpha}(\Sigma)$ such that $\mathcal{L}^\star(\psi_u) = u$ and $\mathcal{L}^\star(\psi_1) = 1$. The desired function is therefore given by $v = \psi_u-\frac{\int \psi_u}{\int \psi_1}\psi_1$ whereby $C_u = -\frac{\int\psi_u}{\int\psi_1}$.
\end{proof}
For the final result of this section we will need to construct a convenient coordinate system in a neighborhood of a 2-sphere. The following result is an adaptation of the more general result found in \cite{andersson2008} (Lemma 6.1):
\begin{lemma}\label{l2.2.1}
Given an embedded 2-sphere $\Sigma\hookrightarrow\mathcal{M}$, there exists a spacetime neighborhood $\mathcal{V}$ of $\Sigma$, with local coordinates $(t,r,x^i)$ on $\mathcal{V}$ and functions $Z, \vartheta, \eta^i, h_{ij}$ such that the metric takes the form
$$g = e^Z(dt\otimes dr+dr\otimes dt)+h_{ij}(dx^i-\eta^idr)\otimes(dx^j-\eta^jdr)$$
where $\Sigma\cap\mathcal{V}=\{t=0,r=0\}$, $Z(t=0, r=0, x^i)=\log 2$, $\eta^i(t=0, r=0, x^i)=0$, and $h_{ij}$ is a positive definite 2-matrix.
\end{lemma}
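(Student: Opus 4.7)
The proof plan is a Gaussian null coordinate construction adapted to $\Sigma$, in the spirit of Lemma 6.1 of \cite{andersson2008}. First I would fix a smooth null frame $\{\ubar L, L\}$ on the normal bundle $T^\perp \Sigma$, normalized by $\langle \ubar L, L\rangle = 2$, and pick local coordinates $(x^1, x^2)$ on $\Sigma$. Exponentiating $L$ as a null geodesic congruence off of $\Sigma$ produces a smooth null hypersurface $\mathcal{N}^+$, which I coordinatize by the affine parameter $r$ (with $r = 0$ on $\Sigma$) together with the $(x^i)$ propagated by $L(x^i)=0$ along the generators; the leaves $\Sigma^+_r := \{r' = r\} \cap \mathcal{N}^+$ form a smooth foliation by 2-spheres.

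Next, I would extend $\ubar L$ to all of $\mathcal{N}^+$ by requiring it be normal to each $\Sigma^+_r$ with $\langle \ubar L, L\rangle = 2$, and then shoot null geodesics with affine parameter $t$ in the $\ubar L$-direction from each point of $\mathcal{N}^+$, Lie-dragging the coordinates $(r, x^i)$ along. Smooth dependence of geodesic flow on initial data and the implicit function theorem then produce local coordinates $(t, r, x^i)$ on a spacetime neighborhood $\mathcal{V}$ of $\Sigma$. Since $\partial_t$ is a null geodesic tangent, $g_{tt} = 0$; the identities $\partial_t g_{tx^i} = \frac{1}{2}\partial_{x^i} g_{tt} = 0$ and $\partial_t g_{tr} = \frac{1}{2}\partial_r g_{tt} = 0$ (from commutation of coordinate vector fields together with $D_{\partial_t}\partial_t = 0$), combined with the initial values $g_{tx^i}|_{t=0} = \langle \ubar L, \partial_{x^i}\rangle = 0$ and $g_{tr}|_{t=0} = \langle \ubar L, L\rangle = 2$, then force $g_{tx^i} \equiv 0$ and $g_{tr} \equiv 2$ throughout $\mathcal{V}$. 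Putting $h_{ij} := g_{x^ix^j}$, $e^Z := g_{tr}$, and $\eta^i := -h^{ij}g_{rx^j}$ matches the stated form, with $Z|_\Sigma = \log 2$ and $\eta^i|_\Sigma = 0$ (since $g_{rx^i}|_\Sigma = \langle L, \partial_{x^i}\rangle = 0$). Positive definiteness of $h_{ij}$ holds on $\Sigma$ as the induced metric and persists in a neighborhood by continuity.

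The main technical point is the hidden compatibility constraint $g_{rr} = h^{ij}g_{rx^i}g_{rx^j}$ implicit in the stated form --- equivalently, that $\tilde\partial_r := \partial_r + \eta^i\partial_{x^i}$ remains null. This holds on $\mathcal{N}^+$ trivially since $\partial_r|_{\mathcal{N}^+} = L$ is null and $g_{rx^i}|_{\mathcal{N}^+} = \langle L, \partial_{x^i}\rangle = 0$. To carry it off $\mathcal{N}^+$, I would either verify preservation of the constraint along $\partial_t$ using the null geodesic equation for $\partial_t$ and the already established $g_{tt} = g_{tx^i} = 0$, $g_{tr} = 2$, or refine the construction into a full double-null foliation by also generating a family of null hypersurfaces $\{\mathcal{N}^-_r\}$ from each $\Sigma^+_r$ via $\ubar L$-geodesics; in the latter approach $\{r = \text{const}\}$ becomes null, $dr$ is a null one-form, and the constraint is automatic. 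This is where shrinking $\mathcal{V}$ and invoking smooth dependence of null geodesic data on initial conditions become essential.
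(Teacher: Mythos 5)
The overall plan — build $\mathcal{N}^+$ from $\Sigma$ along $L$, foliate by affine $r$, then fill a neighbourhood with $\ubar L$-geodesics — is the same starting point as the paper, and you correctly flag the compatibility constraint $g_{rr}=h^{ij}g_{ri}g_{rj}$ (equivalently, that $\tilde\partial_r\propto\nabla t$ be null, i.e.\ $g^{tt}=0$) as the crux. However, there is a genuine gap in how you plan to close it.

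Your first option — ``verify preservation of the constraint along $\partial_t$ using the null geodesic equation'' — does not work. The constraint is exactly the statement that the level sets $\{t=\text{const}\}$ are null hypersurfaces. In your Gaussian-null construction, $t$ is the \emph{affine parameter} of the $\ubar L$-geodesics, so $\{t=0\}=\mathcal{N}^+$ is null, but the constant-affine-parameter slices $\{t=c\}$ for $c\neq 0$ are not generically null. Indeed, writing $g^{tt}=\tfrac14\bigl(h^{ij}g_{ri}g_{rj}-g_{rr}\bigr)$, a Taylor expansion at $t=0$ gives $g^{tt}=O(t^2)$ with
$$
\partial_t^2 g^{tt}\big|_{t=0}=\tfrac12\Bigl(h^{ij}(\partial_t g_{ri})(\partial_t g_{rj})-\langle R(\partial_t,\partial_r)\partial_t,\partial_r\rangle-|D_{\partial_t}\partial_r|^2\Bigr)\big|_{t=0},
$$
and the curvature and shear contributions on the right do not cancel against the first term in a general spacetime. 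So the constraint fails at second order and cannot be rescued by the geodesic equation for $\partial_t$ alone.

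Your second option — the double-null refinement — is the right idea and is what the paper actually does, but as stated it misdiagnoses which family is the obstruction. In your construction, $\{r=\text{const}\}$ is \emph{already} null: it is precisely the union $\mathcal{N}^-_r$ of affine $\ubar L$-geodesics emanating normally from $\Sigma^+_r$, and a null congruence launched orthogonally from a spacelike surface sweeps out a null hypersurface. Equivalently, $g^{rr}=0$ is an algebraic consequence of $g_{tt}=g_{ti}=0$, $g_{tr}\neq 0$, with no constraint content. The constraint is $g^{tt}=0$, i.e.\ that the \emph{$t$-level sets} be null, and to achieve this you must replace the affine parameter $t$ by the level function of a second family of null hypersurfaces: foliate $\mathcal{N}^-_0$ (equivalently the paper's $\mathcal{N}$) by spheres $\Sigma_t$, shoot from each $\Sigma_t$ along its adapted null normal to form $S_t$, and set $t$ to be constant on $S_t$. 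The paper's proof carries out exactly this double-null scheme, shows $\partial_t\propto\nabla r$ is pre-geodesic (so $g_{tt}=g_{ti}=0$), and then uses the nullness of $\nabla t$ to kill the residual $\vartheta\,dr\otimes dr$ term. With that correction your second route coincides with the paper's argument; the first route should be abandoned.
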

\begin{proof}
We start by choosing a null basis $\{L,\ubar L\}\subset\Gamma(T^\perp\Sigma)$, $\ubar L$ past-pointing. For sufficiently small $|t|$, the map $(p,t)\to \exp(tL|_p)$, $p\in\Sigma$, defines a smooth embedding of a null hypersurface $\mathcal{N}\hookrightarrow\mathcal{M}$ with corresponding foliation $\{\Sigma_t\}\subset\mathcal{N}$ whereby $\Sigma_0=\Sigma$ and each $\Sigma_t$ is a spacelike 2-sphere. If we denote the null tangent along $\mathcal{N}$ also by $L$ then each $\Sigma_t$ admits an adapted null normal $\ubar L_t$ such that $\langle L,\ubar L_t\rangle=2$, moreover, $\ubar L_0 = \ubar L$. Standard smooth dependence on initial conditions of ODEs, coupled with the inverse function theorem then guarantees that if we collect null geodesics along $\ubar L_t$, we fill-in a neighborhood $\mathcal{V}$ of $\Sigma$. $\mathcal{V}$ is foliated by smooth null hypersurfaces $\{S_t\}$ off of $\ubar L_t\in\Gamma(T^\perp\Sigma_t)$, whereby $\mathcal{N}\cap S_t = \Sigma_t$. By shrinking $\mathcal{V}$ if necessary, the parameter $t$ extends to a smooth function whereby $S_{t_0}=\{t=t_0\}$, and $\text{grad}(t)\in\Gamma(T^\perp S_t)\subset\Gamma(TS_t)$, giving $|\text{grad}(t)|^2=0$. From the identity $D_{\text{grad}(t)}\text{grad}(t) = \frac12\text{grad}|\text{grad}(t)|^2=0$ it follows that $\text{grad}(t)$ generates null geodesics ruling the leaves of the foliation $\{S_t\}$, and the vector field $2\text{grad}(t)$ extends $\ubar L_t$ off of $\mathcal{N}$ to all of $\mathcal{V}$.\\
\indent For sufficiently small $|r|$ the map $(p,r)\to\exp(2r\nabla t|_p)$, $p\in\Sigma$, induces a foliation $\{\Sigma_r\}\subset S_0$ with adapted null basis $\{L_r,2\text{grad}(t)\}$. Repeating the process above we obtain another smooth foliation of $\mathcal{V}$ by null hypersurfaces $\{T_r\}$, generated by the similarly constructed null geodesic vector field $\text{grad}(r)$. We now simply carry local co-ordinate functions $(x^1,x^2)$ from $\Sigma$ to $S_{t_0}\cap T_{r_0}$ by Lie-dragging $x^i$ along $\text{grad}(t)$ from $\Sigma$ to $\Sigma_{r_0}$, and then along $\text{grad}(r)$ to $S_{t_0}\cap T_{r_0}$. This construction gives $t$-coordinate curves that are null pregeodesic, $\partial_t\propto\text{grad}(r)$, so that
$$g = e^{Z}(dt\otimes dr+dr\otimes dt)+\vartheta dr\otimes dr+h_{ij}(dx^i-\eta^i dr)\otimes(dx^j-\eta^j dr).$$
On $\Sigma$ we have $L=\partial_t$, $\ubar L = \partial_r$, $Z\equiv \log 2$, $\vartheta = 0$, $\eta\equiv 0$, and $h_{ij} = \gamma_{ij}$.\\
\indent Since $\text{grad}(t)$ is null, we have $\text{grad}(t)(t) = |\text{grad}(t)|^2 = 0$, so that $\text{grad}(t) = a\partial_r+\alpha^i\partial_i$. Moreover, $a\neq 0$ since $\vec{\alpha}:=\alpha^i\partial_i$ is spacelike. From this we see that $0=\partial_i(t) = \langle \partial_i,\text{grad}(t)\rangle = -a\eta_i+\alpha_i$ whereby $\eta_i = h_{ij}\eta^j$, $\alpha_i = h_{ij}\alpha^i$. Moreover, $0=\partial_r(t) = \langle \partial_r, \text{grad}(t)\rangle = a\vartheta + a|\vec{\eta}|^2-\vec{\eta}\cdot\vec{\alpha}=a\vartheta$ where $\vec{\eta}:=\eta^i\partial_i$, giving $\vartheta=0$.
\end{proof}
\begin{theorem}\label{t3}
Consider a neighborhood $\mathcal{V}:=(-a,a)\times(-b,b)\times\mathbb{S}^2$, $a,b>0$ and a smooth Lorentzian metric $g_0\in \text{Sym}(T^\star\mathcal{V}\otimes T^\star\mathcal{V})$ satisfying the Dominant Energy Condition. Assume also that $\Sigma_0\cong \{0\}\times\{0\}\times\mathbb{S}^2$ is a strictly stable quasi-round MOTS in the geometry $(\mathcal{V},g_0)$ satisfying $\delta_{L^+}\langle\vec{H},\vec{H}\rangle=0$. Then, given a smooth path of Lorentzian metrics $\lambda\to g_\lambda\in \text{Sym}(T^\star\mathcal{V}\otimes T^\star\mathcal{V})$, $\lambda\in [0,c)\subset\mathbb{R}$, each $g_\lambda$ satisfying the Dominant Energy Condition, there exists an $0<\epsilon<c$ and a unique smooth path of embeddings, $\phi_\lambda:\mathbb{S}^2\hookrightarrow\mathcal{V}$, $0\leq \lambda\leq \epsilon$, such that each $\Sigma_\lambda:=\phi_\lambda(\mathbb{S}^2)$ is a quasi-round MOTS in the geometry $(\mathcal{V},g_\lambda)$.
\end{theorem}
\begin{proof}
We take $\epsilon>0$ sufficiently small to ensure the induced metric $h_\lambda = g_\lambda|_{\Sigma_0}$ remains positive definite for $0\leq \lambda\leq \epsilon$. We also choose a smoothly varying null normal $\ubar L_\lambda\in\Gamma(T^\perp\Sigma_0)_{g_\lambda}$ and shrink $\epsilon$ if necessary so that $\tr\ubar\chi_\lambda:=\ubar L_\lambda\log\sqrt{\det h_{ij}(\lambda)}>0$, giving a smoothly varying Null Inflation Basis $\{L^-_\lambda, L^+_\lambda\}\subset\Gamma(T^\perp\Sigma_0)_{g_\lambda}$. Since the co-ordinates in Lemma \ref{l2.2.1} depend smoothly on the metric and choice of normal null basis, we apply the construction using $\{L^-_\lambda,L^+_\lambda\}$ and conclude that a sufficiently small neighborhood $\mathcal{V}_0\subset\mathcal{V}$ exists on which all metrics take the form
\begin{align*}
g_\lambda &= e^Z(dt\otimes dr+dr\otimes dt)+ h_{ij}(dx^i-\eta^i dr)\otimes(dx^j-\eta^jdr)\\
g^{-1}_\lambda&=e^{-Z}\Big(\partial_t\otimes(\partial_r+\eta^i\partial_i)+(\partial_r+\eta^i\partial_i)\otimes\partial_t\Big)+h^{ij}\partial_i\otimes\partial_j
\end{align*}
where $\Sigma_0\cap\mathcal{V}_0=\{t=r=0\}$, $Z(\lambda, t=0,r=0,x^i)=\log2$, $\eta^i(\lambda, t=0,r=0,x^i)=0$, and $h_{ij}(\lambda,t,r,x^i)$ is positive definite.\\ 
\indent We now choose sufficiently small $C$ such that any $f,g\in C^{k+4,\alpha}(\Sigma_0)$ satisfying $|f|_{k+4,\alpha,\Sigma_0},|g|_{k+4,\alpha,\Sigma_0}\leq C$ ensures $(t=f(x_i),r=g(x_i),x_i)\in\mathcal{V}_0$. This defines an embedding $\Phi(f,g)(\Sigma_0):=\Sigma_{f,g}$ with induced metric (suppressing $\lambda$ dependence) $\gamma_{f,g} := (h_{ij}+A_{ij})dx^i\otimes dx^j$ whereby
$$A_{ij}=e^{Z}(f_ig_j+f_jg_i)+|\eta|^2g_ig_j-(\eta_ig_j+\eta_jg_i)\in C^{k+3,\alpha}(\Sigma).$$
Defining $\tr A:=h^{ij}A_{ij}$, $\hat{A} := A-\frac12(\tr A)h$, we leave it to the reader to verify, by shrinking $C$ if necessary, one ensures $(1+\frac12\tr A)>\frac{1}{\sqrt{2}}|\hat{A}|$ for all $\lambda\leq \epsilon$, consequently:
$$\gamma^{-1}_{f,g} = \frac{(1+\frac12\tr A)h^{ij}-\hat{A}^{ij}}{(1+\frac12\tr A)^2-\frac12|\hat{A}|^2}\partial_i\otimes\partial_j.$$
It follows that $g_\lambda|_{T^\perp\Sigma_{f,g}}$ is non-degenerate and $T^\perp\Sigma_{f,g}$ is trivial with basis vectors
\begin{align*}
N_1&:=\text{grad}(t-f)=-e^{-Z}\vec{\eta}\cdot\vec{\nabla}f\partial_t+e^{-Z}(\partial_r+\vec{\eta})-\vec{\nabla} f,\\
N_2&:=\text{grad}(r-g)=e^{-Z}(1-\vec{\eta}\cdot\vec{\nabla}g)\partial_t-\vec{\nabla}g,
\end{align*}
where we define, $\vec{\nabla}f:=h^{ij}\partial_if\partial_j$, when restricted to $\Sigma_{f,g}$. We have (with a slight abuse of notation) $d\Phi(f,g)(\partial_i) = \partial_i+f_i\partial_t+g_i\partial_r$ and conclude that 
\begin{align*}
D_{d\Phi(\partial_i)}d\Phi(\partial_j) &= f_{ij}\partial_t+g_{ij}\partial_r\\
&+f_jD_{\partial_i}\partial_t+g_jD_{\partial_i}\partial_r+f_if_jD_{\partial_t}\partial_t+f_ig_jD_{\partial_t}\partial_r+g_if_jD_{\partial_r}\partial_t+g_ig_jD_{\partial_r}\partial_r\\
&+D_{\partial_i}\partial_j.
\end{align*}
Since $\langle \partial_t,N_1\rangle = \partial_t(t-f)=1$, $\langle\partial_t,N_2\rangle=\partial_t(r-g) = 0$, $\langle \partial_r,N_1\rangle = \partial_r(t-f)=0$, $\langle\partial_r,N_2\rangle=\partial_r(r-g)=1$ we have
\begin{align*}
\langle\vec{H},N_1\rangle&=\gamma^{ij}\langle D_{\Phi(\partial_i)}\Phi(\partial_j),N_1\rangle = \gamma^{ij} f_{ij}+F_1(\partial f,\partial g,f,g)=:\mathcal{L}_1(f)\\
\langle\vec{H},N_2\rangle&=\gamma^{ij} g_{ij}+F_2(\partial f,\partial g,f,g)=:\mathcal{L}_2(g)
\end{align*}
with $F_i$ smooth functions. From the non-degeneracy of $g_\lambda|_{T^\perp\Sigma_{f,g}}$ we conclude $\mathcal{D}^2:=\langle N_1,N_2\rangle^2-\langle N_1,N_1\rangle\langle N_2,N_2\rangle>0$ so that $\mathfrak{a},\mathfrak{b}\in C^{k+3,\alpha}(\Sigma)$ given by
$$\mathfrak{a}: = \frac{-\langle N_1,N_1\rangle}{\mathcal{D}+\langle N_1,N_2\rangle},\,\,\mathfrak{b}=\frac{1}{\langle N_1,N_2\rangle+a\langle N_2,N_2\rangle}$$
produces the null vector fields $\ubar L_{f,g}:=2N_1+2\mathfrak{a}N_2,\,\,L_{f,g}:=\mathfrak{b}N_2-\frac12\langle N_2,N_2\rangle \mathfrak{b}^2\ubar L_{f,g}\in\Gamma(T^\perp\Sigma_{f,g})$ satisfying $\langle \ubar L,L\rangle = 2$, since $\mathfrak{b}\langle \ubar L_{f,g},N_2\rangle=2$. It's easily verified for $f=g=0$ that $\mathfrak{a}=0$, $\mathfrak{b}=2$ giving $\ubar L_{0,0} =\partial_r=L^-_\lambda$, $L_{0,0}=\partial_t = L^+_\lambda$ on $\Sigma$. Moreover, we conclude that the expansions along $\ubar L_{f,g}, L_{f,g}$ are given by
$$\begin{pmatrix}
\tr\ubar\chi_{f,g}\\\tr\chi_{f,g}
\end{pmatrix}
=-\begin{pmatrix}
2&2\mathfrak{a}\\
-\langle N_2,N_2\rangle \mathfrak{b}^2&\mathfrak{b}(1-\langle N_2,N_2\rangle \mathfrak{a}\mathfrak{b})
\end{pmatrix}
\begin{pmatrix}
\mathcal{L}_1(f)\\
\mathcal{L}_2(g)
\end{pmatrix}
$$
for each $\lambda$. Since $\tr\ubar\chi(\lambda, t=0,r=0,x^i)=1$, we shrink $C$ if necessary so that $\tr\ubar\chi(\lambda, f(x^i),g(x^i),x^i)>0$. Thus, for each fixed $\lambda\leq\epsilon$, the map $\ring{C}^{k+4,\alpha}(\Sigma_0)\times C^{k+4,\alpha}(\Sigma_0)\to\ring{C}^{k,\alpha}(\Sigma_0)\times C^{k+2,\alpha}(\Sigma_0)$ given by 
$$\begin{pmatrix}
f\\g
\end{pmatrix}
\to
\begin{pmatrix}
\mathcal{K}_{\Sigma_{f,g}}+\nabla\cdot\zeta_{f,g}-\Delta\log\tr\ubar\chi_{f,g}-\frac{4\pi}{|\Sigma_{f,g}|}\\\tr\ubar\chi_{f,g}\tr\chi_{f,g}
\end{pmatrix}_\lambda$$
is well defined. We recognize for $\lambda = 0$ that the linearization of this map at $(f,g)\equiv0$ is given in Theorem \ref{t2} with bounded inverse, therefore satisfying the hypotheses of the Banach Space Implicit Function Theorem. By shrinking $\epsilon>0$ if necessary, we therefore conclude with the unique existence of some $(f_\lambda, g_\lambda)\in \ring{C}^{k+4,\alpha}(\Sigma_0)\times C^{k+4,\alpha}(\Sigma_0)$ for each $\lambda\leq\epsilon$ as desired in the statement of our Theorem (see the figure below).  From the induced metric on $\Sigma_{f_\lambda,g_\lambda}$ and the expressions of $N_i$ we conclude that
$$-\Delta_{h_\lambda}\log\tr\ubar\chi_{f_\lambda,g_\lambda} = \frac{4\pi}{|\Sigma_{f_\lambda,g_\lambda}|}-\mathcal{K}_{\Sigma_{f_\lambda,g_\lambda}}-\nabla\cdot\zeta_{f_\lambda,g_\lambda}\in C^{k+1,\alpha}(\Sigma_0).$$ From standard regularity results for second order elliptic PDE (see, for example \cite{gilbarg2015elliptic}) we conclude that $\tr\ubar\chi_{f_\lambda,g_\lambda}\in C^{k+3,\alpha}(\Sigma_0)$ and therefore
$$\begin{pmatrix}
\mathcal{L}_1(f_\lambda)\\
\mathcal{L}_2(g_\lambda)
\end{pmatrix}
=\begin{pmatrix}
\frac12(\langle N_2,N_2\rangle \mathfrak{a}\mathfrak{b}-1)&\frac{\mathfrak{a}}{\mathfrak{b}}\\
-\frac12\langle N_2,N_2\rangle \mathfrak{b}&-\frac{1}{\mathfrak{b}}
\end{pmatrix}
\begin{pmatrix}
\tr\ubar\chi_{f_\lambda,g_\lambda}\\0
\end{pmatrix}
\in C^{k+3,\alpha}(\Sigma_0)\times C^{k+3,\alpha}(\Sigma_0)
.$$
It follows that $f_\lambda,g_\lambda \in C^{k+5,\alpha}(\Sigma_0)$, itterating the above procedure yields $f_\lambda, g_\lambda\in C^{\infty}(\Sigma_0)$.
\end{proof}

\begin{figure}[h]
\centering
\includegraphics[width=0.45\textwidth]{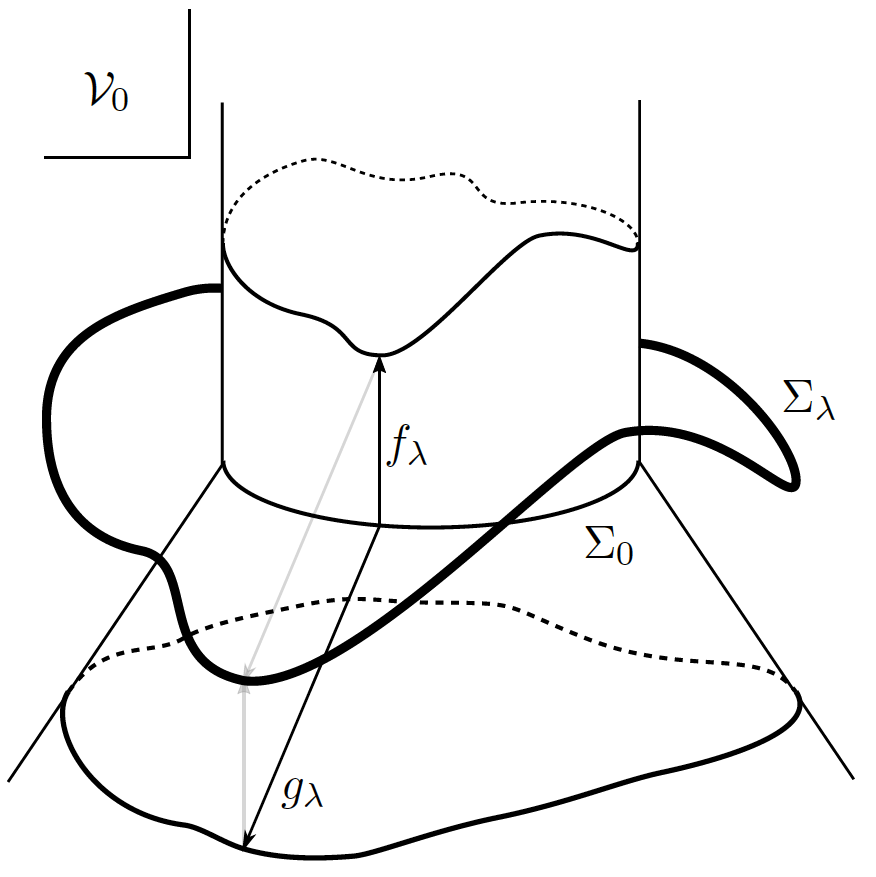}
\end{figure}

\section{Stability of the Schwarzschild Null Penrose Inequality}
\subsection{Schwarzschild Geometry, $\mathcal{H}_S$, and $\Omega_S$}
Schwarzschild spacetime models a static black hole of mass $M>0$. The maximal extension of this geometry is called the Kruskal spacetime $(\mathbb{P}\times_r\mathbb{S}^2,g_K)$ which is given by the warped product of the Kruskal Plane $\mathbb{P}:=\{(\frak{u},\frak{v})|\frak{u}\frak{v}>-2Me^{-1}\}$ and the standard round $\mathbb{S}^2$ with warping function $r=g^{-1}(\frak{u}\frak{v})$ for $g(r) = (r-2M)e^{\frac{r}{2M}-1}$, $r>0$. The metric and its inverse are given by:
\begin{align*}
g_K &= F(r)(d\frak{u}\otimes d\frak{v}+d\frak{v}\otimes d\frak{u}) + r^2(d\vartheta\otimes d\vartheta+(\sin\vartheta)^2 d\varphi\otimes d\varphi)\\
g_K^{-1}&=\frac{1}{F}(\partial_\frak{v}\otimes\partial_\frak{u} +\partial_\frak{u}\otimes\partial_\frak{v})+r^{-2}(\partial_\vartheta\otimes\partial_\vartheta+(\sin\vartheta)^{-2}\partial_\varphi\otimes\partial_\varphi)
\end{align*}
where $F(r) = \frac{8M^2}{r}e^{1-\frac{r}{2M}}$, and here $(\vartheta,\varphi)$ represent standard coordinates on $\mathbb{S}^2$. Each round $\mathbb{S}^2$ has area $4\pi r^2$ so we interpret $r$ as a `radius' function and $F(r)$ gives rise to unbounded curvature at $r=0$ and the `black hole' singularity.\\
\indent We observe a NEH $\mathcal{H}_S:=\{\frak{u}=0\}$, where the metric restricts to the degenerate metric
$$\gamma = g|_{\mathcal{H}_S} = (2M)^2(d\vartheta\otimes d\vartheta+(\sin\vartheta)^2 d\varphi\otimes d\varphi).$$
Clearly $\mathcal{H}_S$ is a NEH since any cross-section is homothetic to the standard round sphere. We will concentrate on the physically relevant black-hole geometry characterized by the region $\{\frak{v}>0\}$ where $\mathcal{H}_S$ separates the black hole exterior $\{\frak{u}>0,\frak{v}>0\}$ from the interior $\{\frak{u}<0,\frak{v}>0\}$. To show that $\mathcal{H}_S$ is a Killing horizon it will be convenient to make the co-ordinate change $(\frak{u},\frak{v})\to (v,r)$ whereby $v := 4M\log\frak{v}$. This puts the black hole geometry in the so called, \textit{Ingoing Eddington-Finkelstein} co-ordinate chart:
$$g = -(1-\frac{2M}{r})dv\otimes dv+(dr\otimes dv+dv\otimes dr)+ r^2(d\vartheta\otimes d\vartheta+(\sin\vartheta)^2 d\varphi\otimes d\varphi).$$
Since all the metric components are independent of the $v$ coordinate, we conclude that $\partial_v$ is a Killing vector field. Moreover, $\{\frak{u}=0\}=\{r=2M\}$, therefore $\partial_v$ restricts to a null pre-geodesic generator on $\mathcal{H}_S$, therefore $\mathcal{H}_S$ is a Killing Horizon. We also readily calculate the surface gravity:
$$\kappa_{\partial_v} = \langle D_{\partial_v}\partial_v,\partial_r\rangle|_{\mathcal{H}_S} = -\frac12\partial_r\langle \partial_v,\partial_v\rangle |_{\mathcal{H}_S}= -\frac12\partial_r(1-\frac{2M}{r})|_{\mathcal{H}_S}=\frac{1}{4M}.$$
Since $\kappa_{\partial_v}$ is positive, it follows also that $\mathcal{H}_S$ is strictly stable.\\
\indent  The standard spherically symmetric, past-directed Null Cone of the Schwarzschild spacetime is given by $\Omega_S:=\{v=v_0\}$. Our Killing vector field highlights the fact that our choice of $v_0$ is isometrically indistinguishable from any other. We also find the null geodesic generator for $\Omega_S$ from $\text{grad}(v) = \partial_r$. We conclude that $r$ restricts to an affine parameter along the geodesics generating $\Omega_S$ and therefore any cross section $\Sigma$ can be given as a graph over $\mathbb{S}^2$, $\omega(\vartheta,\varphi) = r|_\Sigma$. 
\begin{lemma}\label{l7} (see \cite{R})
Given a cross section $\Sigma:=\{r=\omega\}$ of $\Omega_S$ we have: 
\begin{align*}
\gamma= \omega^2\mathring{\gamma},\qquad\ubar\chi=\frac{1}{\omega}\gamma,&\qquad\tr\ubar\chi=\frac{2}{\omega},\qquad\zeta=-\s{d}\log\omega,\\
\chi=\frac{1}{\omega}(1-\frac{2M}{\omega}+|\nabla\omega|^2)\gamma-2\nabla^2\omega,&\qquad\tr\chi=\frac{2}{\omega}\Big(1-\frac{2M}{\omega}-\omega^2\Delta\log\omega\Big),\\
\rho&=\frac{2M}{\omega^3}.
\end{align*}
where $\mathring\gamma$ is the round metric on $\mathbb{S}^2$.
\end{lemma}
As a result of the strong Maximum Principle, we observe that the only cross-section of $\Omega_S$ satisfying $\tr\chi=0$ is given by $\Sigma_0:=\{v=v_0,r=2M\}=\Omega_S\cap\mathcal{H}_S$. Moreover, since $\tau = \zeta-d\log\tr\ubar\chi = 0$, $\Sigma_0$ is a quasi-round MOTS (in-fact $\Sigma_0$ is `round') since $\mathcal{K} =\mathcal{K}+\nabla\cdot\tau= \frac{1}{4M^2}$. Translating $\Sigma_0=\mathcal{H}_S\cap\Omega_S$ along $\partial_v$, we have identified the unique foliation of $\mathcal{H}_S$ by quasi-round MOTS.
\begin{remark}\label{r2}
We take this opportunity to further motivate the mass functional, (\ref{e2}). From Lemma \ref{l7}, any cross-section $\Sigma:=\{r=\omega\}\subset\Omega_S$ exhibits the mass:
$$m(\Sigma) = \frac12\Big(\int_\Sigma\Big(\frac{2M}{\omega^3}\Big)^{\frac23}\frac{\omega^2d\mathring{A}}{4\pi}\Big)^{\frac32}=M.$$
We observe, regardless of the cross-section of $\Omega_S$, the mass functional is only sensitive to the Trautman-Bondi mass $M$. This is the main advantage over prior attempts at studying the Null Penrose Inequality, as this mass is generally insensitive to infinitesimal spacetime boosts unlike the Hawking Energy and energy functionals generally (see \cite{R}). We will exploit this advantage in Theorem \ref{t4}.
\end{remark}
\subsection{Assumptions and Asymptotics}
We wish to show that small metric perturbations of the Schwarzschild spacetime satisfies the Null Penrose Inequality. In order to do so, we will need certain asymptotics along a Null Cone to explicitly identify the Trautman-Bondi Energy and Mass.\\ 
\indent The analysis of ODEs in this section will make frequent use of the following result:
\begin{lemma}\label{l8}(\cite{hale2009ordinary}, Corollary 6.3)\\
Let $w(t,u)$ be continuous on $a\leq t<b$, $u\geq0$ with the initial value problem  $\dot{u} = w(t,u)$ having a unique solution $u(t)\geq 0$ for $a\leq t<b$. If $f:[a,b)\times \mathbb{R}^n\to\mathbb{R}^n$ is continuous and 
$$|f(t,x)|\leq w(t,|x|),\,\,a\leq t<b,\,\,x\in\mathbb{R}^n,$$
then the solutions of 
$$\dot{x} = f(t,x),\,\,|x(a)|\leq u(a)$$
exist on $[a,b)$ and $|x(t)|\leq u(t)$.
\end{lemma}

Upon applying Theorem \ref{t3} to the strictly stable quasi-round MOTS $\Sigma_0$ in $\Omega_S$, we will need to assume the existence of some $\epsilon>0$, so that any $\lambda\leq \epsilon$ ensures the existence of a Null Cone $\Omega_\lambda$ to the past of the quasi-round MOTS $\Sigma_\lambda$ along $L_\lambda^-\in\Gamma(T^\perp\Sigma_\lambda)$. We denote by $\ubar L_\lambda\in\Gamma(T^\perp\Omega_\lambda)\subset\Gamma(T\Omega_\lambda)$ the extension of $L_\lambda^-$ by $D_{\ubar L_\lambda}\ubar L_\lambda=0$. For convenience, we will take an associated geodesic foliation $\{\Sigma_s\}_s$ such that $s_0 = 1$. The following definition follows along the lines of \cite{MS1}:
\begin{definition}\label{d13} Denoting a smooth $k$-tensor by $T|_p:T_p\Omega_\lambda\otimes_1...\otimes_{k-1}T_p\Omega_\lambda\to\mathbb{R}$, then for a basis extension $\{X_i\}\subset E(\Sigma_\lambda)$ along $\ubar L_\lambda$:
\begin{enumerate}
\item We say $T$ is transversal whenever $T(\ubar L_\lambda,X_{i_1},...,X_{i_{k-1}})=...=T(X_{i_1},...,X_{i_{k-1}},\ubar L_\lambda) = 0$
\item We say $T=O_n(s^{-m})$ whenever
$$s^m(\pounds_{X_{i_1}}\cdots\pounds_{X_{i_j}}T(s))(X_{l_1},\cdots,X_{l_k})=O(1),\,\,(0\leq j\leq n)$$
\item Given $\sum_i|X_i(1)|_\lambda^2\leq C$ for some fixed constant $C$ and all $\lambda\leq \epsilon$, we say $T(\lambda,s)=O^\lambda_n(s^{-m})$ whenever $T=O_n(s^{-m})$ and
$$\lim\sup_{\lambda\to0}\Big(\sup_{\Omega_\lambda}|s^m(\pounds_{X_{i_1}}\cdots\pounds_{X_{i_j}}T(\lambda,s))(X_{l_1},\cdots,X_{l_k})|\Big)=0,\,\,(0\leq j\leq n).$$
\end{enumerate}
\end{definition}
With this definition in hand, and some $0<\delta<1$, we assume the following conditions on the ambient curvature:
\begin{align}
\ubar \alpha_\lambda(X_i,X_j)&:=\langle R_{\ubar L_\lambda X_i}\ubar L_\lambda,X_j\rangle= O^\lambda_4(s^{-1-\delta})\label{e18}\\
\mathcal{L}_{\ubar L_\lambda}\ubar\alpha_\lambda&=O^{\lambda}(s^{-2-\delta})\label{e19}\\
G_{\ubar L_\lambda}(X_i)&:=G(\ubar L_\lambda,X_i) = O^\lambda_3(s^{-2-\delta})\label{e20}\\
G(L_\lambda, \ubar L_\lambda) &= O^\lambda_2(s^{-3-\delta})\label{e21}.
\end{align}

From assumptions (\ref{e18}-\ref{e21}) we will be able to make sense of the notion of total mass for $\Omega_\lambda$. In order to do so we need the following known result which, to the author's understanding, is due to S. Alexakis \cite{A}. We provide a proof for completeness and context regarding later results.
\begin{proposition}\label{p7}({\cite{A}}, Lemma 4.1)
For sufficiently small $\epsilon$ there exists a function $\theta=O^\lambda_4(1)$, and transverse 2-tensors $\bar\gamma,\hat{\ubar\chi}=O^\lambda_4(1)$ on $\Omega_\lambda$ such that 
$$\gamma(\lambda,s) = s^2\gamma(\lambda,1)+s^2\bar\gamma,\,\,\tr\ubar\chi = \frac{2}{s}+\frac{\theta}{s^2},\,\,\pounds_{\ubar L_\lambda}{\gamma}=\hat{\ubar\chi}+\frac12\tr\ubar\chi\gamma.$$
Moreover, any $T(\lambda,s)\in\{\theta,\bar\gamma,\hat{\ubar\chi}\}$ observed under the natural pull-back to $\Sigma_\lambda$, has a four times continuously differentiable limit $\displaystyle{T_\infty(\lambda) = \lim_{s\to\infty}T(\lambda,s)}$ whereby 
$$\pounds_{x_{i_1}}\cdots\pounds_{x_{i_j}}T_\infty(\lambda) = \lim_{s\to\infty}\pounds_{X_{i_1}}\cdots\pounds_{X_{i_j}}T(\lambda,s),$$ 
for $x_{i} = X_{i}(1)$, $\{X_i\}\subset E(\Sigma_\lambda)$, $0\leq j\leq 4$. 
\end{proposition}
\begin{proof}
We start by Lie-dragging $\gamma(\lambda,1)$ along $\ubar L_\lambda$ to the rest of $\Omega_\lambda$, denoted by $\gamma_0$. We also define $\tilde\gamma:=\frac{1}{s^2}\gamma$. Taking a basis extension $\{X_i\}\subset E(U)$, $U\subset\Sigma_\lambda$, we then observe from the structure equations, Proposition \ref{p3}:
\begin{align*}
\frac{d}{ds}{\Big(\sqrt{\frac{\det\tilde\gamma}{\det\gamma_0}}\Big)}^{-1}&=-\frac{\sqrt{\det\gamma_0}}{\det\tilde\gamma}(-\frac{2}{s^3}\sqrt{\det\gamma}+\frac{1}{s^2}\tr\ubar\chi\sqrt{\det\gamma})\\
&=-\frac{\theta}{s^2}\Big(\sqrt{\frac{\det\tilde\gamma}{\det\gamma_0}}\Big)^{-1}\\
\frac{d}{ds}{\tilde{\gamma}}_{ij}&=-\frac{2}{s^3}\gamma_{ij}+\frac{2}{s^2}\hat{\ubar\chi}_{ij}+\frac{1}{s^2}\tr\ubar\chi\gamma_{ij}\\
&=\frac{\theta}{s^2}\tilde\gamma_{ij}+\frac{2}{s^2}\hat{\ubar\chi}_{ij}\\
\frac{d}{ds}{\theta}&=2s\tr\ubar\chi-4+s^2(\frac{2}{s^2}-\frac12\tr\ubar\chi^2-|\hat{\ubar\chi}|^2-G(\ubar L_\lambda,\ubar L_\lambda))\\
&=-\frac{1}{2s^2}(s^4\tr\ubar\chi^2-4s^3\tr\ubar\chi+4)-s^2\gamma^{ij}\gamma^{kl}\hat{\ubar\chi}_{ik}\hat{\ubar\chi}_{jl}-s^2\gamma^{ij}\ubar\alpha_{ij}\\
&=-\frac{1}{2s^2}\theta^2-\frac{1}{s^2}\tilde\gamma^{ij}\tilde\gamma^{kl}\hat{\ubar\chi}_{ik}\hat{\ubar\chi}_{jl}-\tilde\gamma^{ij}\ubar\alpha_{ij}
\end{align*}
\begin{align*}\frac{d}{ds}{\hat{\ubar\chi}}_{ij}&=-\hat{\ubar\alpha}_{ij}+2\gamma^{kl}\hat{\ubar\chi}_{ik}\hat{\ubar\chi}_{lj}\\
&=-\hat{\ubar\alpha}_{ij}+\frac{2}{s^2}\tilde\gamma^{kl}\hat{\ubar\chi}_{ik}\hat{\ubar\chi}_{lj}.
\end{align*}
We now define $\bar{\gamma}_{ij}:=\tilde\gamma_{ij}-{\gamma_0}_{ij}$, $\mathcal{D}:=(\sqrt{\frac{\det\tilde\gamma}{\det\gamma_0}})^{-1}-1$, and $u^2(\lambda,s):=\mathcal{D}^2+\theta^2+\sum_{ij}\Big((\bar{\gamma}_{ij})^2+\hat{\ubar\chi}_{ij}^2\Big)$. We therefore have
\begin{align*}
|\frac{d}{ds}{\mathcal{D}}|&\leq\frac{u}{s^2}(u+1)\\
\sqrt{\sum_{ij}{(\frac{d}{ds}{\bar{\gamma}}_{ij}})^2}&=\sqrt{\sum_{ij}{{(\frac{d}{ds}{\tilde\gamma}_{ij}}})^2}\leq \frac{u}{s^2}(u+|\gamma_0|)+\frac{2}{s^2}u\\
|\frac{d}{ds}{\theta}|&\leq\frac{1}{2s^2}u^2+\frac{2}{s^2}|\tilde\gamma^{-1}|^2|\hat{\ubar\chi}|^2+|\tilde\gamma^{-1}|\frac{c(\lambda)}{s^{1+\delta}}\\
&=\frac{1}{2s^2}u^2+\frac{2}{s^2}\det\tilde\gamma^{-2}|\tilde\gamma|^2|\hat{\ubar\chi}|^2+\det\tilde\gamma^{-1}|\tilde\gamma|\frac{c(\lambda)}{s^{1+\delta}}\\
&\leq\frac{1}{2s^2}u^2+\frac{(u+1)^2}{\det\gamma_0}(u+|\gamma_0|)\Big(\frac{2u^2}{s^2}\frac{(u+1)^2}{\det\gamma_0}(u+|\gamma_0|)+\frac{c(\lambda)}{s^{1+\delta}}\Big)\\
\sqrt{\sum_{ij}{(\frac{d}{ds}{\hat{\ubar\chi}})^2_{ij}}}&\leq|\ubar\alpha-\frac12\tilde\gamma^{ij}\ubar\alpha_{ij}\tilde\gamma|+\frac{2}{s^2}\det\tilde\gamma^{-1}|\tilde\gamma||\hat{\ubar\chi}|^2\\
&\leq\frac{c(\lambda)}{s^{1+\delta}}(1+\det\tilde\gamma^{-1}|\tilde\gamma|^2)+2\frac{u^2}{s^2}\frac{(u+1)^2}{\det\gamma_0}(u+|\gamma_0|)\\
&\leq\frac{c(\lambda)}{s^{1+\delta}}+\frac{(u+1)^2}{\det\gamma_0}(u+|\gamma_0|)\Big(\frac{c(\lambda)}{s^{1+\delta}}(u+|\gamma_0|)+2\frac{u^2}{s^2}\Big)\end{align*}
for some continuous function $c:[0,\epsilon]\to[0,\infty)$ whereby $c(0)=0$. After a simple modification of $c(\lambda)$ we may therefore conclude that
$$\sqrt{(\frac{d}{ds}{\mathcal{D}})^2+(\frac{d}{ds}{\theta})^2+\sum_{ij}({\frac{d}{ds}{\bar{\gamma}}_{ij}})^2+\sum_{ij}(\frac{d}{ds}{\hat{\ubar\chi}}_{ij})^2)}\leq \frac{uP(u)+c(\lambda)}{s^{1+\delta}}$$
for some seventh order polynomial $P$ with constant positive coefficients.\\
\indent Regarding solutions to the ODE 
$$\frac{d}{ds}{y}=\frac{y P(y)+c(\lambda)}{s^{1+\delta}},\,\,y(\lambda,1)=\sup_{\Sigma_\lambda}|\hat{\ubar\chi}|(1),$$
immediately we note that $y(\lambda,s)$ is monotone increasing in $s$. For any constant $\alpha>0$ we also claim an $\epsilon(\alpha)>0$ such that $\displaystyle{\lim_{s\to\infty}y(\lambda,s)\leq \alpha}$ for all $\lambda\leq\epsilon(\alpha)$. Otherwise, there exists sequences $\{\lambda_i\},\{s_i\}$ such that $\displaystyle{\lim_{i\to\infty}\lambda_i=0},\lim_{i\to\infty}s_i=\infty$, and $y(\lambda_i,s_i)>\alpha$. We obtain a contradiction using the inequality
$$\int_{y(\lambda_i,1)}^\alpha\frac{du}{uP(\alpha)+c(\lambda_i)}\leq\int_{y(\lambda_i,1)}^{y(\lambda_i,s_i)}\frac{du}{uP(u)+c(\lambda_i)}=\int_1^{s_i}\frac{1}{t^{1+\delta}}dt\leq \frac{1}{\delta},$$
since $y(\lambda_i,1) = \displaystyle{\sup_{\Sigma_{\lambda_i}}|\hat{\ubar\chi}|(1)}\xrightarrow{\lambda_i\to0} 0$ causes the first integral to blow-up.\\
In-fact, for 
$$F(x,\lambda):=\int_{\displaystyle{\sup_{\Sigma_\lambda}|\hat{\ubar\chi}|}}^x\frac{du}{uP(u)+c(\lambda)},$$
we see $F_x(x,\lambda)>0$ is continuous for all $(x,\lambda)\in (0,\infty)\times(0,\epsilon)$. Therefore, a standard generalization of the Implicit Function Theorem (see, for example, \cite{loomis2014advanced} Theorem 9.3) ensures that $\displaystyle{y(\lambda,\infty):=\lim_{s\to\infty}y(\lambda,s)}$, given implicitly via $F(y(\lambda,\infty),\lambda) = \frac{1}{\delta}$, is both unique and continuous in $\lambda$. With a similar blow-up argument as above, we also conclude that $\displaystyle{\lim_{\lambda\to 0^+}y(\lambda,\infty)=0}$. From Lemma \ref{l8} it follows that,
$$\sqrt{\mathcal{D}^2+\theta^2+\sum_{ij}\Big({\bar{\gamma}_{ij}}^2+\hat{\ubar\chi}_{ij}^2\Big)}=u(\lambda,s) \leq y(\lambda,s)=O^\lambda(1).$$
Denoting $\vec{T} = (\mathcal{D},\bar{\gamma}_{ij}, \theta, \hat{\ubar\chi}_{ij})$, if we integrate our propagation equations using the bound on $u(\lambda,s)$ we observe
$$\sup_{\Sigma_\lambda}|\vec{T}(s_m)-\vec{T}(s_n)|\leq c(\lambda)|\frac{1}{s_m^\delta}-\frac{1}{s_n^\delta}|.$$
From this Cauchy sequence we deduce uniform convergence to a continuous limit $\displaystyle{\vec{T}_\infty = \lim_{s\to\infty}\vec{T}(s)}$. We now take a derivatives of our propagation equations, using the bound on $u(\lambda,s)$, to observe for any $T\in\{\mathcal{D},\bar{\gamma},\theta,\hat{\ubar\chi}\}$,
$$\frac{d}{ds}(\pounds_{X_k}{T})_{ij} = (\pounds_{X_k}\frac{d}{ds}{{T}})_{ij} = A_i^l(\pounds_{X_k}{T})_{lj}+{h}_{ij}$$
whereby $A_i^j,h_{ij}=\frac{1}{s^{1+\delta}}O^\lambda(1)$. From this, similarly as before, $| \frac{d}{ds}\vec{\pounds_{X_i}T}|\leq\frac{c(\lambda)}{s^{1+\delta}}(|\vec{\pounds_{X_i}T}|+1)$ and it follows by Lemma \ref{l8} that 
$$|\vec{\pounds_{X_i}{T}}(s)|\leq \Big(1+\sqrt{\sum_{jk}(\pounds_{X_i}\hat{\ubar\chi})^2_{jk}(\lambda,1)}\Big)e^{c(\lambda)(1-\frac{1}{s^{\delta}})}-1=O^{\lambda}(1).$$
Once again we integrate this linear system to conclude
$$\sup_{\Sigma_\lambda}|\vec{\pounds_{X_i}T}(s_m)-\vec{\pounds_{X_i}T}(s_n)|\leq c(\lambda)|\frac{1}{s_m^\delta}-\frac{1}{s_n^\delta}|,$$
and we have uniform convergence to a limit $\displaystyle{\lim_{s\to\infty}}\vec{\pounds_{X_i}T} = \vec{T}^\infty_i$. With an adapted co-ordinate system $(x^i,s)$ for $\Omega_\lambda$, $(x^i)$ local coordinates on $\Sigma_\lambda$, we observe $\pounds_{v}((\pi_s)_\star T(s)) = (\pi_s)_\star(\pounds_{V}T(s))$, where $\pi_s:=\pi|_{\Sigma_s}$ is a diffeomorphism, $V\in E(\Sigma_\lambda)$, $v = V(0)$. We conclude, $\pounds_{x_i}T_\infty = T^\infty_i$. The remaining limits now follow analogously from established decay on lower derivatives, for up to three additional derivatives.
\end{proof}

From Proposition \ref{p7} we are now in a position to show that (\ref{e18}-\ref{e21}) are independent of our choice of geodesic generator $\ubar L$. First we recall that any geodesic generator is given by $\ubar L_a = a\ubar L_\lambda$ for some $a\in\mathcal{F}(\Sigma_\lambda)$ (and of-course we imply $a$ is Lie-dragged throughout $\Omega_\lambda$), with affine parameters related by $s-1 = a (s_a-1)$. More generally, if we don't require that our affine parameters restrict to constants on $\Sigma_\lambda$, or equivalently that our initial cross-section $\{s_a=1\}\neq\Sigma_\lambda$, we have $s = as_a +\phi$ for some $a,\phi\in\mathcal{F}(\Sigma_\lambda)$. Moreover, given any extension basis $\{X_i\}$ associated to $\ubar L$, we have another, $\{X_i^a\}$ associated to $\ubar L_a$, given by $X_i^a = X_i+(s_aX_ia+X_i\phi)\ubar L_\lambda$. We see therefore that (\ref{e18}) translates, $\langle R_{\ubar L_a X^a_i}\ubar L_a,X_j^a\rangle = a^2\langle R_{\ubar L_\lambda X_i}\ubar L_\lambda,X_j\rangle = a^2O^\lambda_4(s^{-1-\delta}) = O^\lambda_4(s_a^{-1-\delta})$. Similarly we also observe (\ref{e19}) since $\ubar\alpha_\lambda$ is a transversal tensor. To observe (\ref{e20}) we use Proposition \ref{p7}, $G(\ubar L_\lambda,\ubar L_\lambda) = \gamma^{ij}(\ubar\alpha_\lambda)_{ij} = s^{-2}\tilde\gamma^{ij}(\ubar\alpha_\lambda)_{ij} = O^\lambda_4(s^{-3-\delta})$, so we have $G(\ubar L_a,X^a_i) = aG(\ubar L_\lambda,X_i)+a(s_aX_i(a)+X_i(\phi))G(\ubar L_\lambda,\ubar L_\lambda) = O^\lambda_3(s^{-2-\delta})+s_aO^\lambda_4(s^{-3-\delta}) = O_3^\lambda(s_a^{-2-\delta})$. For (\ref{e21}), we have 
\begin{align*}
G(L_a,\ubar L_a) &= G\Big(\frac{1}{a}(L_\lambda+|\nabla(s_aa+\phi)|^2\ubar L_\lambda-2\nabla(s_aa+\phi)),a\ubar L_\lambda\Big)\\
&= G(L_\lambda,\ubar L_\lambda)+s^{-2}\tilde\gamma^{ij}(s_aa_i+\phi_i)(s_aa_j+\phi_j)G(\ubar L_\lambda,\ubar L_\lambda)-2s^{-2}(s_aa_i+\phi_i)\tilde\gamma^{ij}G(\ubar L_\lambda,X^a_j)\\
&=O^\lambda_2(s^{-3-\delta})+O^\lambda_4(s^{-3-\delta})+O^\lambda_3(s^{-3-\delta})\\
&=O^\lambda_2(s_a^{-3-\delta}).
\end{align*}
As a result of (\ref{e18},\ref{e19}), Proposition \ref{p7} holds for any geodesic foliation off of $\Sigma_\lambda$. One can even generalize, given $a\in\mathcal{F}(\Sigma_\lambda)$, by assigning $\phi(t) = o(t)$ along a foliation defined by $\Sigma_t:=\{s=at+\phi(t)\}$, called an \textit{asymptotically geodesic} foliation of $\Omega_\lambda$. However, for our purposes, geodesic foliations will suffice.\\
\indent For our next result we need the known fact:
\begin{proposition}(\cite{R}, Theorem 3.2)\label{p8}\\
Assume $\{\Sigma_s\}_s$ is a foliation of a Null Cone $\Omega$ along the flow vector $\ubar L = \sigma L^-$ (as described at the beginning of Section 2), then
\begin{align*}
\frac{d}{ds}{\rho}+\frac32\sigma\rho&=\frac{\sigma}{2}\Big(\frac12\langle\vec{H},\vec{H}\rangle\Big(|\hat{\chi}^-|^2+G(L^-,L^-)\Big)+|\tau|^2-\frac12G(L^-,L^+)\Big)\\
&\quad+\Delta\Big(\sigma(|\hat{\chi}^-|^2+G(L^-,L^-)\Big)-2\nabla\cdot(\sigma\hat{\chi}^-\circ\tau)+\nabla\cdot(\sigma G_{L^-}).
\end{align*}
\end{proposition}
\begin{proposition}\label{p9}
For sufficiently small $\epsilon$, along the foliation $\{\Sigma_s\}_s$ associated to $\ubar L_\lambda$, we conclude $\tau = O^{\lambda}_3(s^{-1})$, and $\{\Sigma_s\}_s$ is a doubly convex foliation of $\Omega_\lambda$.
Moreover, there exists functions $\bar\rho,\bar{H}^2=O^\lambda_2(1)$ on $\Omega_\lambda$ such that 
$$\rho_s = \frac{1}{s^3}\Big(\frac{4\pi}{|\Sigma_\lambda|}+\bar\rho\Big),\,\,\langle\vec{H},\vec{H}\rangle_s=\frac{1}{s^2}\Big(\frac{16\pi}{|\Sigma_\lambda|}(1-\frac{1}{s})+\bar{H}^2\Big).$$
For $T\in\{\bar\rho,\bar{H}^2\}$, we also have a twice continuously differentiable limit $\displaystyle{T_\infty(\lambda)=\lim_{s\to\infty}T(\lambda,s)}$, whereby $\pounds_{x_{i_1}}\cdots\pounds_{x_{i_j}}T_\infty(\lambda) = \displaystyle{\lim_{s\to\infty}\mathcal{L}_{X_{i_1}}\cdots\mathcal{L}_{X_{i_j}}T(\lambda,s)}$, $0\leq j\leq 2$, for $x_i = X_i(1)$, $\{X_i\}\subset E(U)$, $U\subset\Sigma_\lambda$.
\end{proposition}
\begin{proof}
From the structure equations and Proposition \ref{p7}, we see that
\begin{align*}
\frac{d}{ds}{(\sqrt{\frac{\det\gamma}{\det\gamma_0}}\tau_i)}&=\sqrt{\frac{\det\gamma}{\det\gamma_0}}\Big(-\nabla\cdot\hat{\ubar\chi}_i+G(\ubar L_\lambda, X_i) +X_i\frac{|\hat{\ubar\chi}|^2+G(\ubar L_\lambda,\ubar L_\lambda)}{\tr\ubar\chi}\Big)\\
&=O^{\lambda}_3(1).
\end{align*}
Therefore, we have $\tau_i = \sqrt{\frac{\det\gamma_0}{\det\gamma}}\tau_i(\lambda,1)+\sqrt{\frac{\det\gamma_0}{\det\gamma}}\int_1^sO^{\lambda}_3(1)dt = O^{\lambda}_3(s^{-1})$. Using this, we apply Proposition \ref{p8} to $\tilde\rho:=s^3\rho$, and Proposition \ref{p6} to $\tilde{H}^2:=s^2\langle\vec{H},\vec{H}\rangle$ (using $\psi = \tr\ubar\chi$ in Proposition \ref{p6}):
\begin{align*}
\frac{d}{ds}{\tilde\rho}&=-\frac{3\theta}{2s^2}\tilde\rho+\frac14\tilde{H}^2\frac{s}{\tr\ubar\chi}\Big(|\hat{\ubar\chi}|^2+G(\ubar L_\lambda,\ubar L_\lambda)\Big)+\frac12(s^3\tr\ubar\chi)|\tau|^2-\frac14(s^3\tr\ubar\chi)G(\ubar L_\lambda, L_\lambda)\\
&\qquad\qquad+s^3\Delta\frac{|\hat{\ubar\chi}|^2+G(\ubar L_\lambda,\ubar L_\lambda)}{\tr\ubar\chi}+s^3\nabla\cdot G_{\ubar L_\lambda}-2s^3\nabla\cdot({\hat{\ubar\chi}}\circ\tau)\\
&=-\frac{3\theta}{2s^2}\tilde\rho+\frac14\tilde{H}^2\frac{s}{\tr\ubar\chi}\Big(|\hat{\ubar\chi}|^2+G(\ubar L_\lambda,\ubar L_\lambda)\Big) + O^{\lambda}_2(s^{-1-\delta})\\
\frac{d}{ds}{\tilde{H}}^2&=\frac{2}{s}\tilde{H}^2-\frac32\tr\ubar\chi\tilde{H}^2-\tilde{H}^2\frac{|\hat{\ubar\chi}|^2+G(\ubar L_\lambda,\ubar L_\lambda)}{\tr\ubar\chi}+\frac12\tr\ubar\chi\tilde{H}^2+\frac{2}{s}\tr\ubar\chi\tilde\rho\\
&\qquad\qquad-2s^2\Delta\tr\ubar\chi-4s^2\nabla\cdot(\tr\ubar\chi\tau)-2s^2\tr\ubar\chi|\tau|^2+(s^2\tr\ubar\chi)G(\ubar L_\lambda,L_\lambda)\\
&=\frac{2}{s}\tr\ubar\chi\tilde\rho-\Big(\frac{\theta}{s^2}+\frac{|\hat{\ubar\chi}|^2+G(\ubar L_\lambda, \ubar L_\lambda)}{\tr\ubar\chi}\Big)\tilde{H}^2+O^{\lambda}_2(s^{-2}).
\end{align*}
Denoting $\vec{T} = (\tilde\rho, \tilde{H}^2)$ we see from the propagation equations that
$$\frac{d}{ds}{\vec{T}} = A\vec{T}+\vec{h}$$
where $A,h = \frac{1}{s^{1+\delta}}O(1)$. From this we conclude that $|\frac{d}{ds}{\vec{T}}|\leq \frac{C}{s^{1+\delta}}(|\vec{T}|+1)$ and therefore Lemma \ref{l8} gives
$$|\vec{T}|\leq(1+ \frac{4\pi}{|\Sigma_\lambda|})e^{C(1-\frac{1}{s^\delta})}-1=O(1).$$
With this bound, we return to the propagation equations to find
$$\frac{d}{ds}{\tilde\rho}=-\frac{3\theta}{2s^2}\tilde\rho+O^{\lambda}(s^{-1-\delta}),$$
from which we deduce, similarly as in Proposition \ref{p7}, a continuous limit $\displaystyle{\rho_\infty:=\lim_{s\to\infty}\tilde{\rho}}$. Moreover, defining $\bar{\rho}:=\tilde\rho-\frac{4\pi}{|\Sigma_\lambda|}$, we have $|\frac{d}{ds}{\bar\rho}|\leq \frac{c(\lambda)}{s^{1+\delta}}(|\bar\rho|+1)$ ensuring that $|\bar\rho|\leq e^{c(\lambda)(1-\frac{1}{s^\delta})}-1=O^{\lambda}(1)$. Similarly, we verify $\bar{H}^2:=\tilde{H}^2- \frac{16\pi}{|\Sigma_\lambda|}(1-\frac{1}{s})=O^{\lambda}(1)$, with a continuous limit as $s\to\infty$. Consider now the propagation of $(\bar{\rho},\bar{H}^2)$:
$$\frac{d}{ds}\begin{pmatrix}
{\bar\rho}\\
{\bar{H}}^2
\end{pmatrix}
=\begin{pmatrix}
O^{\lambda}_4(s^{-2})&O^{\lambda}_4(s^{-1-\delta})\\
O_4(s^{-2})&O^{\lambda}_4(s^{-2})
\end{pmatrix}
\begin{pmatrix}
\bar\rho\\
\bar{H}^2
\end{pmatrix}
+\begin{pmatrix}
O^{\lambda}_2(s^{-1-\delta})\\
O^{\lambda}_2(s^{-2})
\end{pmatrix}.
$$
Taking a derivative results in a linear system of the form $\frac{d}{ds}{X_i}\vec{T} = A({X_i}\vec{T})+\vec{h}$ whereby $A,h=\frac{1}{s^{1+\delta}}O(1)$, and we deduce boundedness of $|{X_i}\vec{T}|$. From this we can bootstrap as in Proposition \ref{p7} to deduce continuous limits and control in $\lambda$ for up to one additional derivative.\\
\indent Finally, we return to the equation
$$\frac{d}{ds}{\tilde\rho} = -\frac{3\theta}{2s^2}\tilde\rho+O^{\lambda}_2(s^{-1-\delta}),$$
and we conclude that 
\begin{align*}
\tilde\rho &= \frac{4\pi}{|\Sigma_\lambda|}e^{\int_1^s\frac{\theta}{t^2}dt}+\int_1^sO^{\lambda}_2(t^{-1-\delta})dt\\
X_i\tilde\rho&=\frac{4\pi}{|\Sigma_\lambda|}\Big(\int_1^s\frac{X_i\theta}{t^2}dt\Big)e^{\int_1^s\frac{\theta}{t^2}dt}+\int_1^sO^{\lambda}_1(t^{-1-\delta})dt = \int_1^sO^\lambda(t^{-1-\delta})dt\\
(\nabla_s^2\tilde\rho)(X_i,X_j)&=\frac{4\pi}{|\Sigma_\lambda|}\Big(\int_1^s\frac{X_i\theta}{t^2}dt\int_1^s\frac{X_j\theta}{t^2}dt+\int_1^s\frac{(\nabla_t^2\theta)(X_i,X_j)}{t^2}dt\Big)e^{\int_1^s\frac{\theta}{t^2}dt}+\int_1^sO^{\lambda}(t^{-1-\delta})dt\\
&=\int_1^sO^\lambda(t^{-1-\delta})dt.
\end{align*}
From the first expression we conclude that sufficiently small $\epsilon$ will ensure $\tilde\rho>0$, from the second and third, 
$$|\Delta_s\log\rho_s|\leq |\gamma^{ij}\frac{(\nabla^2\tilde\rho)_{ij}}{\tilde\rho}|+|\gamma^{ij}\frac{X_i\tilde\rho X_j\tilde\rho}{\tilde\rho^2}|\leq \frac{c(\lambda)}{s^2}(1-\frac{1}{s^{\delta}})$$
for some continuous $c(\lambda)$ such that $c(0) = 0$. We leave to the reader the simple exercise of verifying $\langle\vec{H},\vec{H}\rangle_s\geq \frac{16\pi}{s^2|\Sigma_\lambda|}(1-\frac{1}{s})-\frac{c_1(\lambda)}{s^2}(1-\frac{1}{s^{\delta}})$. As a result, for some modified $c(\lambda)$,
$$\frac14\langle\vec{H},\vec{H}\rangle_s-\frac13\Delta_s\log\rho_s\geq\frac{1}{s^2}\Big( \frac{4\pi}{|\Sigma_\lambda|}(1-\frac{1}{s})-c(\lambda)(1-\frac{1}{s^{\delta}})\Big)=:\frac{1}{s^2}f(\lambda,s).$$
From $\partial_sf(\lambda,s) = \frac{1}{s^2}\Big(\frac{4\pi}{|\Sigma_\lambda|}-\delta c(\lambda)s^{1-\delta}\Big)$, we observe precisely one critical value at $s=\Big(\frac{4\pi}{\delta c(\lambda)|\Sigma_\lambda|}\Big)^{\frac{1}{1-\delta}}.$ For sufficiently small $\epsilon$, we have $\frac{4\pi}{|\Sigma_\lambda|}\geq c(\lambda)$, $\lambda\leq \epsilon$, giving $f(\lambda,1) = 0$, $(\partial_sf)(\lambda,1)=\frac{4\pi}{|\Sigma_\lambda|}-\delta c(\lambda)>0$, and $\displaystyle{\lim_{s\to\infty}f(\lambda,s)} = \frac{4\pi}{|\Sigma_\lambda|}-c(\lambda)\geq 0$. We conclude therefore that $f(\lambda,s)\geq 0$ for all $\lambda\leq\epsilon$, and $\{\Sigma_s\}_s$ is a doubly convex foliation.
\end{proof}
\subsection{Trautman-Bondi Energy and Mass}
\begin{definition}\label{d14} Given a spacelike 2-sphere $\Sigma$ with mean curvature $\vec{H}=\tr_\Sigma\II$, the \textit{Hawking Energy} is given by
$$E_H(\Sigma):=\sqrt{\frac{|\Sigma|}{16\pi}}\Big(1-\frac{1}{16\pi}\int_\Sigma\langle\vec{H},\vec{H}\rangle dA\Big).$$
\end{definition}

\indent We will be interested in taking a limit of the Hawking Energy along foliations to `null infinity' of our Null Cone. Existence of a limit has been analyzed in work of Christodoulou-Klainermann \cite{christodoulou2014global}, generalized by Bieri \cite{bieri2010}, also Klainerman-Nicol\`{o} \cite{klainerman2003evolution}, Chru\'{s}ciel-Paetz \cite{chrusciel2014mass}, and a setting similar to ours due to Mars-Soria \cite{MS1}. From this limit, one is able to define the notion of the total mass of a null cone $\Omega$ called the \textit{Trautman-Bondi mass} (see Definition \ref{d15}). Historically, the definition of mass at null infinity can be traced back to Trautman \cite{Trautman:1958zz}, generalizing, and predating, a more explicit coordinate based construction by Bondi et al. \cite{bondi1962gravitational,sachs1962gravitational}.  We refer the reader to \cite{bieri2016future} for more details, including how the Trautman-Bondi mass relates to the famous one of Arnowitt-Deser-Misner \cite{arnowitt2008republication} at spacelike infinity.\\
\indent The Schwarzschild spacetime and standard Null Cone $\Omega_S$ is a convenient toy model to motivate the Trautman-Bondi notion of total Energy and Mass. First we recall, given the standard round metric $\mathring{\gamma}$ on $\mathbb{S}^2$, any conformal rescaling $\gamma_\omega:=\omega^2\mathring\gamma$ exhibits Gauss curvature
$$\mathcal{K}_\omega = \frac{1}{\omega^2}(1-\mathring\Delta\log\omega).$$
Up to a diffeomorphism, the \textit{Uniformization Theorem} states that any metric on $\mathbb{S}^2$ can be given as such a conformal rescaling of $\mathring\gamma$. We therefore refer to $S = (\mathbb{S}^2,\gamma)$ as a \textit{round sphere} whenever the a metric $\gamma$ induces constant Gauss curvature, $\mathcal{K}=\frac{4\pi}{|S|}$ via the Gauss-Bonnet theorem. It follows that $(\mathbb{S}^2,\gamma_\omega)$ is a round sphere, if and only if $\omega$ solves the non-linear equation 
$$1-\Big(\frac{\omega}{r_0}\Big)^2 = \mathring\Delta\log\Big(\frac{\omega}{r_0}\Big)$$
for $r_0:=\sqrt{\frac{|S|}{4\pi}}$. Solutions take the form $\omega(\vartheta,\varphi) = r_0\frac{\sqrt{1-|\vec{v}|^2}}{1-\vec{v}\cdot\vec{n}(\vartheta,\varphi)}$, for some $\vec{v}$ inside the unit ball $\mathring{B}^3\subset\mathbb{R}^3$, and $\vec{n}(\vartheta,\varphi)$ the unit position vector. \\
\indent For cross-sections $\Sigma:=\{r=\omega\}\subset\Omega_S$, we observe that $\Sigma = (\mathbb{S}^2,\gamma_\omega)$. Using Lemma \ref{l7}, a simple calculation gives that any round sphere $\Sigma_{\vec{v}}^{r_0}\hookrightarrow\Omega_S$ observes a Hawking Energy $E_H(r_0,\vec{v}) = \frac{M}{\sqrt{1-|\vec{v}|^2}}$ which is precisely the observed (special relativistic) energy of a particle of mass $M$ traveling at velocity $\vec{v}$ relative to its observer. In this setting the Trautman-Bondi energy is given by 
$$E_{TB}(\vec{v}) = \displaystyle{\lim_{r_0\to\infty}}E_H(\Sigma^{r_0}_{\vec{v}}) = \frac{M}{\sqrt{1-|\vec{v}|^2}}.$$ 
It follows that the energy of an asymptotically round foliation approaches the mass $M$ only if $\vec{v}=0$ `at infinity' as expected. Thus giving the Trautman-Bondi mass as $m_{TB}(\Omega_S) = \inf_{\vec{v}}E_{TB}(\vec{v}) = M$.\\\\
\indent Given Proposition \ref{p7}, we are in a position to define the total Trautman-Bondi energy and mass of our Null Cones $\Omega_\lambda$. We will temporarily denote by $E^a(\Sigma_\lambda)$ the the set of vector field extensions off of $\Sigma_\lambda$ along $\ubar L_a:=a\ubar L_\lambda$, whereby $s-1=a(s_a-1)$, and $0<a\in\mathcal{F}(\Sigma_\lambda)$. We recall $X\in E(\Sigma_\lambda)$ produces $X^a:=X+X(a)(s_a-1)\ubar L_\lambda\in E^a(\Sigma_\lambda)$. Relative to a basis extension $\{X_i\}\subset E(U)$, $U\subset\Sigma_\lambda$, we observe from Proposition \ref{p7} that the sphere `at infinity' inherits the metric $\displaystyle{\gamma^\infty_{ij}:=\lim_{s\to\infty}\frac{\gamma_{ij}(\lambda,s)}{s^2}=\lim_{s\to\infty}\frac{1}{s^2}\langle X_i,X_j\rangle}$ along $\{\Sigma_s\}_s$. Along $\{\Sigma_{s_a}\}$, the sphere at infinity inherits the metric 
$$\gamma^a_{ij}=\displaystyle{\lim_{s_a\to\infty}\frac{1}{s_a^2}\langle X^a_i,X^a_j\rangle = \lim_{s_a\to\infty}\frac{\gamma_{ij}(\lambda,s)}{s^2}\frac{s^2}{s_a^2}=a^2\gamma^\infty_{ij}}.$$
By the Uniformization Theorem, we may therefore choose $a_\lambda$ such that $a_\lambda^2\gamma^\infty = \ring{\gamma}$. We will denote
$$\phi_{\vec{v}}(\vartheta,\varphi):=\frac{\sqrt{1-|\vec{v}|^2}}{1-\vec{v}\cdot\vec{n}(\vartheta,\varphi)}$$
where $\vec{v}\in\mathring{B}^3\subset\mathbb{R}^3$, and $\vec{n}(\vartheta,\varphi)\in\partial\mathring{B}^3$.
\begin{definition}\label{d15}
The total Trautman-Bondi Energy $E_{TB}(\lambda, \vec{v})$ of $\Omega_\lambda$ is given by
$$E_{TB}(\lambda,\vec{v}):=\lim_{t\to\infty}E_H(\Sigma_t)$$
whereby $s-1=(a_\lambda\omega_{\vec{v}})(t-1)$. The total Trautman-Bondi Mass $m_{TB}(\lambda)$ is given by
$$m_{TB}(\lambda) = \inf_{\{\vec{v}||\vec{v}|<1\}}E_{TB}(\lambda,\vec{v}).$$
\end{definition}
\subsection{Stability of the Null Penrose Inequality}
We may re-write the Hawking Energy for a cross-section $\Sigma$ using the Gauss-Bonnet and Divergence theorems as
$$E_H(\Sigma) = \frac{1}{4\pi}\sqrt{\frac{|\Sigma|}{16\pi}}\int_\Sigma\rho dA.$$
From Propositions \ref{p1}, \ref{p7}, and \ref{p9} we observe that sufficiently small $\epsilon$ ensures that 
$$\sqrt{\frac{|\Sigma_\lambda|}{16\pi}}\leq \lim_{s\to\infty}m(\Sigma_s) = \frac12\Big(\int_{\mathbb{S}^2}(\rho_\infty)^{\frac23}\frac{dA_\infty}{4\pi}\Big)^{\frac32}$$
where $\rho_\infty:=\displaystyle{\lim_{s\to\infty}}(s^3\rho_s)$, $dA_\infty:=\displaystyle{\lim_{s\to\infty}}\frac{1}{s^2}\sqrt{\frac{\det\gamma_s}{\det\gamma_0}}dA_0$. Therefore, our final result follows as soon as we show
$$\frac12\Big(\int_{\mathbb{S}^2}(\rho_\infty)^{\frac23}\frac{dA_\infty}{4\pi}\Big)^{\frac32} \leq m_{TB}(\lambda).$$
In order to do so, we need the following Proposition:
\begin{proposition}(\cite{R}, Theorem 4.1)\label{p10}
Consider a cross-section $\Sigma_\omega:=\{s=\omega\}\subset\Omega_\lambda$, $\omega\in\mathcal{F}(\Sigma_\lambda)$, with associated flux function $\rho$. Then, the point-wise decomposition of $\rho$ relative to data $(\rho_s,\ubar\chi_s,\tau_s)$ associated to the background foliation $\{\Sigma_s\}_s$ along $\ubar L_\lambda$ is given by:
\begin{align*}
\rho &= \Big(\rho_s +\frac{|\hat{\ubar\chi}_s|^2+G(\ubar L_\lambda, \ubar L_\lambda)}{\tr\ubar\chi_s}\Big(\Delta_s\omega-2\hat{\ubar\chi}_s(\nabla_s\omega,\nabla_s\omega)\Big)+2\nabla_s\omega\frac{|\hat{\ubar\chi}_s|^2+G(\ubar L_\lambda, \ubar L_\lambda)}{\tr\ubar\chi_s}\\
&\qquad+\frac12\Big(|\hat{\ubar\chi}_s|^2+G(\ubar L_\lambda,\ubar L_\lambda)+2\ubar L_\lambda\frac{|\hat{\ubar\chi}_s|^2+G(\ubar L_\lambda, \ubar L_\lambda)}{\tr\ubar\chi_s}\Big)|\nabla_s\omega|^2+G(\ubar L_\lambda,\nabla_s\omega)-2\hat{\ubar\chi}_s(\vec{\tau}_s,\nabla_s\omega)\Big)\circ\pi.
\end{align*}
\end{proposition}
\begin{theorem}\label{t4}

Consider the Schwarzschild spacetime metric in ingoing Eddington-Finkelstein coordinates $(v,r,\vartheta,\varphi)$:
$$g_0 = -(1-\frac{2M}{r})dv\otimes dv+(dr\otimes dv+dv\otimes dr)+r^2\big(d\vartheta\otimes d\vartheta+(\sin\vartheta)^2d\varphi\otimes d\varphi\big),$$
on the neighborhood $\mathcal{U}:=(-\epsilon_0+v_0,\epsilon_0+v_0)\times(r_0,\infty)\times\mathbb{S}^2$, $\epsilon_0,r_0>0$. Consider also a smooth path of metrics, $\lambda\to g_\lambda\in \text{Sym}(T^\star\mathcal{U}\otimes T^\star\mathcal{U})$, $0\leq \lambda \leq c$, satisfying the Dominant Energy Condition. For $\{\Sigma_\lambda\}_{0\leq \lambda\leq \epsilon}$ the corresponding family of smooth quasi-round MOTS of Theorem \ref{t3}, whereby $\Sigma_0=\Omega_S\cap\mathcal{H}_S\cong\{v_0\}\times\{2M\}\times\mathbb{S}^2$ is a standard quasi-round MOTS of Schwarzschild, we assume the existence of an $\epsilon_1\leq \epsilon$ such that the past directed Null Cones, $\Omega_\lambda\supset\Sigma_\lambda$, $0\leq \lambda\leq \epsilon_1$, exist satisfying the curvature decay conditions (\ref{e18})-(\ref{e21}). Then, there exists an $0<\epsilon_2\leq \epsilon_1$ such that the Null Penrose Inequality
$$\sqrt{\frac{|\Sigma_\lambda|}{16\pi}}\leq m_{TB}(\lambda)$$
holds for $0\leq \lambda\leq \epsilon_2$.

\end{theorem}
\begin{proof}
For any geodesic foliation $\{\Sigma^a_t\}\subset\Omega_\lambda$ whereby $\omega^a_t:=s|_{\Sigma_t}=a(t-1)+1$, $a\in\mathcal{F}(\Sigma_\lambda)$, we observe from Propositions \ref{p7}, \ref{p9}, and \ref{p10}
$$\rho_t=\frac{(s^3\rho_s)|_{\Sigma_t}}{\omega_t^3}+O^{\lambda}(t^{-3-\delta})\implies\lim_{t\to\infty}t^3\rho_t = \frac{\rho_\infty}{a^3}\geq 0.$$
The Hawking Energy along $\{\Sigma^a_t\}$ satisfies
\begin{align*}
\lim_{t\to\infty}E_H(\Sigma^a_t) &=\lim_{t\to\infty}\Big(\frac{1}{4\pi}\sqrt{\frac{|\Sigma^a_t|}{16\pi t^2}}\int_{\mathbb{S}^2}t^3\rho_t\frac{(\omega^a_t)^2}{t^2}\Big(\frac{1}{s^2}\sqrt{\frac{\det\gamma_s}{\det\gamma_0}}\Big)\Big|_{\Sigma_t^a}dA_0\Big)\\
& = \frac{1}{4\pi}\sqrt{\frac{1}{16\pi}\int_{\mathbb{S}^2} a^2dA_\infty}\int_{\mathbb{S}^2}\frac{\rho_\infty}{a}dA_\infty.
\end{align*}
It's a simple exercise using H\"{o}lder's inequality and the fact that $\rho_\infty\geq 0$ to show
$$\Big(\int_{\mathbb{S}^2}(\rho_\infty)^{\frac23}dA_\infty\Big)^{\frac32}\leq\inf_{a>0}\Big\{\sqrt{\int_{\mathbb{S}^2}a^2dA_\infty}\int_{\mathbb{S}^2}\frac{\rho_\infty}{a}dA_\infty\Big\}.$$
In-fact, we obtain equality by taking $a_\varepsilon:=(\rho_\infty+\varepsilon)^{\frac13}$, $\varepsilon>0$, and noting
\begin{align*}
\Big(\int_{\mathbb{S}^2}(\rho_\infty)^{\frac23}dA_\infty\Big)^{\frac32} &= \lim_{\varepsilon\to 0}\Big\{ \sqrt{\int_{\mathbb{S}^2}a_{\varepsilon}^2dA_\infty}\int_{\mathbb{S}^2}\frac{\rho_\infty+\varepsilon}{a_{\varepsilon}}dA_\infty\Big\}\\
&=\lim_{\varepsilon\to 0} \Big\{ \sqrt{\int_{\mathbb{S}^2}a_{\varepsilon}^2dA_\infty}\int_{\mathbb{S}^2}\frac{\rho_\infty}{a_{\varepsilon}}dA_\infty\Big\}\\
&\geq \inf_{a>0}\Big\{\sqrt{\int_{\mathbb{S}^2}a^2dA_\infty}\int_{\mathbb{S}^2}\frac{\rho_\infty}{a}dA_\infty\Big\}.
\end{align*}
Using the Uniformization theorem, every Trautman-Bondi energy can be realized as the limit of the Hawking Energy along some geodesic foliation off of $\Sigma_\lambda$ (see, for example \cite{MS1}). As a result,
$$\frac12\Big(\int_{\mathbb{S}^2}(\rho_\infty)^{\frac23}\frac{dA_\infty}{4\pi}\Big)^{\frac32}\leq \inf_{\{\vec{v}||\vec{v}<1\}}E_{TB}(\lambda,\vec{v}) = m_{TB}(\lambda).$$
\end{proof}
\section*{Acknowledgments}
This material is based upon work supported by the National Science Foundation under Award No. 1703184. The author would like to thank Hubert Bray and Richard Schoen for their continued support and encouragement, as well as Marc Mars, and Chao Li for helpful conversations on topics addressed by this paper. This work developed out of the author's visit to the Erwin Schr\"{o}dinger International Institute for Mathematics and Physics (ESI) during the ``Geometry and Relativity Conference" in the summer of 2017, which the author would like to acknowledge.
\bibliographystyle{abbrv}
%\cleardoublepage
\normalbaselines %Fixes spacing of bibliography
%\addcontentsline{toc}{section}{Bibliography} %adds Bibliography to your table of contents
\bibliography{horizon stability.bbl} %your bibliography file - change the path if needed/
\end{document}